\documentclass[11pt]{article}

\usepackage[margin=1in]{geometry}
\usepackage[T1]{fontenc}
\usepackage{lmodern}          
\usepackage{hyperref}
\usepackage{url}
\usepackage{float}
\usepackage{amsthm}
\usepackage{multirow}
\usepackage{algorithm}
\usepackage[noend]{algpseudocode}

\theoremstyle{proposition}

\theoremstyle{lemma}
\newtheorem{lemma}{Lemma}

\theoremstyle{corollary}
\newtheorem{corollary}{Corollary}

\theoremstyle{theorem}
\newtheorem{theorem}{Theorem}

\theoremstyle{definition}

\theoremstyle{assumption}

\theoremstyle{remark}

\usepackage{amsmath,amssymb}
\usepackage{graphicx}
\usepackage{booktabs}
\usepackage[numbers]{natbib}  
\usepackage{authblk}          

\title{Improving score-based sampling via affine post-processing}

\author[1]{Curtis McDonald}
\author[2]{Kevin Rojas}
\author[2]{Molei Tao}
\author[1,3]{Peter Bartlett}
\affil[1]{Simons Institute, University of California Berkeley}
\affil[2]{Georgia Tech}
\affil[3]{Google Deepmind}

\date{Sept 30, 2026}  

\begin{document}
\maketitle

\begin{abstract}
Sampling from a probability distribution with access only to its log-density is a foundational problem in statistics, machine learning, and the physical sciences. This problem is studied in the data-free score-based diffusion literature, which samples the target density via estimating scores of densities produced by convolution with Gaussian random variables and implementing a reverse SDE. In this framework, score estimators are constructed via Markov chain Monte Carlo, importance sampling, or rejection sampling. The estimators often operate far beyond the regime where mixing time and error control can be established and have large bias and variance when estimating the underlying score. In our method we propose post-processing: splitting a fixed computation budget across multiple evaluations, querying the estimator at different points, and fitting a single pooled estimate via linear combination to minimize bias and variance at a target location. The final value is an affine map of a vector of multiple score estimates. We show this reduces score estimation error when compared to the computation-matched base estimator queried at a single point. We demonstrate improved sample quality over other score estimation methods matched for computation cost on difficult target densities such as log-concave densities with large condition number, multimodal densities, and high-dimensional densities.\footnote{Code available at \url{https://github.com/KevinRojas1499/Score-Affine-Post-Processing}}
\end{abstract}

\section{Introduction}

Sampling from a probability density $\pi(x) \propto e^{-f(x)}$ given only the negative log-density $f$ is a foundational problem across Bayesian inference, statistical physics, and computational biology, among many other fields. Traditionally, this problem has been approached with sampling algorithms such as time-invariant Markov chain Monte Carlo, annealing methods, and tempering algorithms. More recently, the field of data-free score-based diffusion has offered an alternative method: sampling from the target density by reversing a noising process of Gaussian convolution. Implementing such a sampling algorithm requires estimating the noised score function with only knowledge of $f$.

Consider a random variable $U$ defined by a linear combination of a sample from $\pi$ and a Gaussian random variable,
\begin{align}
U = a X+\sqrt{1-a^{2}}Z, a \in [0,1], X \sim \pi, Z \sim N(0,I).\label{U_def}
\end{align}
Let $p_{a}$ denote the density of $U$ at signal level $a$, and write $s(u,a) = \nabla_{u}\log p_{a}(u)$ for its score.
The ability to estimate $s(u,a)$ for various inputs $u$ and $a$ is what facilitates sampling from $\pi$ via score-based diffusion \citep{song2021scorebased}.

The score is equal to an expectation over the reverse conditional density $p_{a}(x|u)$ using Tweedie's formula \citep{robbins1992empirical}. This has motivated a line of diffusion-based Monte Carlo methods in which the score of the noised target is estimated pointwise by a random estimator $\hat{s}(u,a)$ computed with knowledge of $f$ alone. This estimator should be considered a random variable with internal randomness: when queried at a fixed $u$ and $a$ value, it will return 
a value with some probability distribution. With knowledge of $f$, this estimator can be a form of importance sampling (IS) \citep{mcdonald2022proposal}, rejection sampling (RS) \citep{he2024zeroth}, or Markov chain Monte Carlo (MCMC) \citep{huang2024reverse, huang2024faster, grenioux2024stochastic} approximating an empirical average of Tweedie's formula to estimate the score.

The reverse conditional density $p_{a}(x|u)$ which defines the expectation in Tweedie's formula is often high-dimensional and multimodal, making it particularly difficult to estimate for sampling-based score estimators. Although such estimators may draw hundreds or thousands of proposal points, often only a handful contribute meaningfully to the empirical average. The effective sample size is therefore much smaller than the number of proposal points, and it grows little as more samples are drawn. As a result, these estimators suffer from high bias and variance at practical sample sizes, with significant error reduction only at sample sizes exponentially large in the state space dimension.
 
In this paper, we consider an alternative route to reducing bias and variance: querying the estimator at a fixed location $u$ across multiple signal levels $\{a_{i}\}_{i=0}^{N}$, and pooling the resulting estimates $\{\hat{s}(u,a_{i})\}_{i=0}^{N}$ together with the known terminal values $s(u,0) = -u$ and $s(u,1) = -\nabla f(u)$. For any vector $h \in \mathbb{R}^{N+1},g \in \mathbb{R}^{2},\nu \in \mathbb{R}$ we define an \textit{affine post-processor} as any estimation rule
\begin{align}
\bar{s}(u,a_{0}) = \sum\nolimits_{i=0}^{N} h_{i} \hat{s}(u,a_{i}) + g_{0}s(u,0) + g_{1}s(u,1)-\nu \label{affine_eqn}.
\end{align}
It is a form of linear smoother or ridge regression performing bias and variance reduction via a shrinkage procedure. We refer to the signal level $a_{0}$ as the on-policy level, and the queries $\hat{s}(u,a_{i}),i \neq 0$ as off-policy evaluations.

The choice of parameters $h = e_{0},~g = 0,~\nu = 0$, where $e_{0} = (1,0,\dots,0) \in \mathbb{R}^{N+1}$, we call ``pure on-policy estimation'' at $\hat{s}(u,a_{0})$, so simply relying on the score estimator's on-policy estimate is already a form of affine post-processing. If there is an alternative choice of $(h,g,\nu)$ which strictly reduces the average squared bias and variance at a signal level, then it reduces the path-averaged score error that controls the sampling error bound, which we observe as a reduction in terminal Wasserstein distance ($W_{2}$). We demonstrate that:
\begin{enumerate}\setlength{\itemsep}{2pt}\setlength{\topsep}{2pt}\setlength{\parskip}{0pt}
    \item The optimal choice of smoothing parameters relies on knowledge of the bias and variance of the base estimator. It can be written down in closed form, which we call the Wiener oracle post-processor. If it is not equal to $(e_{0}, 0,0)$, then optimally chosen affine post-processing strictly reduces score estimation error.
    \item We show pure on-policy estimation is optimal only under a specific zero-gradient condition most estimators will violate. In particular, for any unbiased estimator with non-zero variance, pure on-policy estimation is not optimal among affine post-processors.
    \item We construct practical, non-oracle post-processors via ridge regression in the Legendre polynomial basis, and in our experiments their score error captures most of the improvement available to the Wiener oracle.
    \item Post-processing is compared at matched computation cost: a fixed MCMC  budget of $\mathcal{N}_{\text{eval}}$ evaluations is split across the $N+1$ query levels, and the pooled estimate is compared to a single on-policy estimate given the full $\mathcal{N}_{\text{eval}}$ budget. The effective sample size grows slowly in $\mathcal{N}_{\text{eval}}$, thus the on-policy estimator gains little from the extra budget, while the split-and-pool estimator reduces score estimation error substantially.
\end{enumerate}
Empirically, diffusion-based samplers built on pure on-policy score estimation degrade quickly as dimension grows. The Tweedie integral becomes hard to estimate directly in high dimensions, and base score estimates carry large bias and large variance. Post-processing is a general procedure that can be applied on top of any base score estimator. We run experiments with IS and Reverse Diffusion Monte Carlo (RDMC) \citep{huang2024reverse} as base estimators in high-dimensional problems. Post-processing reduces score estimation error substantially, and improves sample quality with $W_2$ distance up to two orders of magnitude smaller at equal compute. This makes diffusion-based samplers practical in dimensions where their base estimators would otherwise fail.

\section{Background}

\subsection{Diffusion Models}
Fix a horizon $T > 0$ and a decreasing signal level schedule $\alpha : [0,T] \to (0,1]$, with
$\alpha(0) = 1$. The forward stochastic process
\begin{align}
    dX_{t} = \frac{\dot{\alpha}(t)}{\alpha(t)}\,X_{t}\,dt
           + \sqrt{-\frac{2\dot{\alpha}(t)}{\alpha(t)}}\,dB_{t},
    \qquad X_{0} \sim \pi,
    \label{eq:forward}
\end{align}
has marginals $X_{t} \overset{d}{=} \alpha(t)X_{0} + \sqrt{1-\alpha(t)^{2}}\,Z$
with $Z \sim N(0,I)$ independent of $X_{0}$, so $X_{t} \sim p_{\alpha(t)}$;
it is the variance-preserving Ornstein--Uhlenbeck process run on the clock
$\alpha$. By the time-reversal of diffusions
\citep{song2021scorebased, kingma2021variational}, the reverse process
\begin{align}
    dY_{t} = -\frac{\dot{\alpha}(T-t)}{\alpha(T-t)}
             \Big(Y_{t} + 2\,s\big(Y_{t},\alpha(T-t)\big)\Big)dt
           + \sqrt{-\frac{2\dot{\alpha}(T-t)}{\alpha(T-t)}}\,dB_{t},
    \qquad Y_{0} \sim p_{\alpha(T)},
    \label{eq:reverse}
\end{align}
satisfies $Y_{t} \sim p_{\alpha(T-t)}$, so that $Y_{T} \sim \pi$. Different $\alpha$ time
schedules are reparametrizations of the same reverse dynamics; the schedule
decides only which signal levels a discretization visits. We note there are additional small errors introduced by initializing at $Y_{0} \sim N(0,I) \neq p_{\alpha(T)}$ and terminating at time $t'<T$ which are considered fully in the discrete process error analysis theory \citep{song2021maximum, lee2022convergence, chen2023sampling, chen2023improved, benton2024nearly, conforti2025kl, wang2024evaluating}.

In this work we use the linear schedule $\alpha(t) = 1-t$, for which
$-\dot{\alpha}/\alpha = 1/\alpha$. Specializing to the natural parameter schedule of the reverse process, the reverse process runs over signal levels $a$ directly, with $a$ increasing and $a \to 1$ approaching a sample from the target density
\begin{align}
    dY_{a} = \frac{1}{a}\big(Y_{a} + 2\,s(Y_{a},a)\big)\,da
           + \sqrt{\frac{2}{a}}\,dW_{a},
    \label{eq:reverse-linear}
\end{align}
Given a score estimator $\hat{s}$ and a grid
$0<a_{0} < a_{1} < \cdots < a_{M}<1$ with $h_{k} = a_{k+1} - a_{k}$, we initialize
$y_{0} \sim N(0,I) \approx p_{a_{0}}$ and take Euler--Maruyama steps
\begin{align}
    y_{k+1} = y_{k} + \frac{h_{k}}{a_{k}}\big(y_{k} + 2\,\hat{s}(y_{k},a_{k})\big)
            + \sqrt{\frac{2h_{k}}{a_{k}}}\,\xi_{k},
    \qquad \xi_{k} \sim N(0,I).
    \label{eq:em}
\end{align}
We use a non-uniform grid which follows the exponential-decay design of
\citet{chen2023improved,benton2024nearly,he2024zeroth}, which was shown to be optimal and prevents numerical instability in \eqref{eq:reverse-linear}, see Appendix~\ref{app:grid} for a precise definition. 

\subsection{Monte Carlo Score Estimators}

With knowledge of $f$, the score can be written as a posterior expectation using Tweedie's formula \citep{robbins1992empirical},
\begin{align}
    s(u,a) = E\Big[-\frac{u-aX}{1-a^{2}}\,\Big|\,U=u\Big] = \int -\left(\frac{u-ax}{1-a^{2}}\right) \frac{e^{-f(x)-\frac{1}{2(1-a^{2})}\|u - a x\|^{2}}}{\int e^{-f(x')-\frac{1}{2(1-a^{2})}\|u - a x'\|^{2}}dx' }dx ,\label{Tweedie_eqn}
\end{align}
where the expectation is over the reverse conditional density $p_{a}(x|u) \propto e^{-f(x)}e^{-\|u-ax\|^{2}/2(1-a^{2})}$. Every base score estimator we consider replaces this expectation with an empirical average of a set of sampled proposal points. Each estimator generates proposal points $\{z_{i}\}_{i=1}^{n}$ and non-negative weights $\{\omega_{i}\}_{i=1}^{n}$ summing to 1, and estimates the score as the average $\hat{s}(u,a) = -\frac{u}{1-a^{2}}+\frac{a}{1-a^{2}}\sum_{i=1}^{n}\omega_{i}z_{i}$. The estimators differ only in how the $(z_{i},\omega_{i})$ are produced.
IS draws $z_{i}$ i.i.d.\ from the Gaussian proposal $N(u/a, \tfrac{1-a^{2}}{a^{2}}I)$ and sets $\omega_{i} \propto e^{-f(z_{i})}$. 
RS \citep{he2024zeroth} draws from the same proposal, accepts each point with probability proportional to $e^{-f(z_{i})}$, and places equal weight on the accepted points, which are exact draws from $p_{a}(x|u)$.
RDMC \citep{huang2024reverse} runs an unadjusted Langevin chain of length $K$ on $p_{a}(x|u)$, using $\nabla f$, and takes $z_{i}$ to be the chain iterates with equal weights; unlike IS and RS the $z_{i}$ are correlated, so the estimator is a time average along one trajectory.

\subsection{Ridge Regression}

Fix a location $u$ and a coordinate $j$, which we suppress. The score $s(u,a) = s_{j}(u,a)$ is a smooth function of $a \in [0,1]$, and the base estimator returns noisy, possibly biased observations $\hat{\mathbf{s}}(u) = (\hat{s}(u,a_{0}),\dots,\hat{s}(u,a_{N}))$ of that function at the design levels $a_{0},\dots,a_{N}$. Its two endpoint values are known exactly, $v(u) = (s(u,0),\,s(u,1)) = (-u,\,-\partial f(u))$. Estimating a function at a target point from noisy evaluations elsewhere is a classical problem, and there are two equivalent expressions.

Choose weights $h = h(a_{0}) \in \mathbb{R}^{N+1}$ and $g = g(a_{0}) \in \mathbb{R}^{2}$ and return the linear combination
\begin{align}
    \bar{s}(u,a_{0}) = \langle h, \hat{\mathbf{s}}(u)\rangle + \langle g, v(u)\rangle. \label{eq:smoother_view}
\end{align}
This is the affine post-processor \eqref{affine_eqn} with $\nu = 0$. The weights depend on $a_{0}$ and the design levels but not on $u$ or on the observed values.

Fix an orthogonal functional basis $\{\phi_{k}\}_{k\geq 0}$ of $L^{2}[0,1]$ (we use shifted Legendre polynomials) and a degree $K \leq N$. Fit coefficients $\beta \in \mathbb{R}^{K+1}$ by weighted ridge regression of the observations onto the basis, constrained to pass through the endpoints, and evaluate the fitted polynomial at $a_{0}$:
\begin{align}
    \bar{s}(u,a_{0}) = \sum_{k=0}^{K}\beta^{*}_{k}(u)\,\phi_{k}(a_{0}) = \langle b, \beta^{*}(u)\rangle, \qquad b = (\phi_{0}(a_{0}),\dots,\phi_{K}(a_{0})). \label{eq:regression_view}
\end{align}
Writing $\Phi_{i,k} = \phi_{k}(a_{i})$ for the design matrix and $\Psi_{e,k} = \phi_{k}(e)$, $e \in \{0,1\}$, for the basis at the endpoints, the ridge solution is linear in the data: $\beta^{*}(u) = M_{1}\hat{\mathbf{s}}(u) + M_{2}v(u)$ for matrices $M_{1}, M_{2}$ that depend only on $\Phi$, $\Psi$, and the penalties. Substituting into \eqref{eq:regression_view},
\begin{align}
    \bar{s}(u,a_{0}) = \langle M_{1}^{\top}b,\ \hat{\mathbf{s}}(u)\rangle + \langle M_{2}^{\top}b,\ v(u)\rangle,
\end{align}
which is \eqref{eq:smoother_view} with $h = M_{1}^{\top}b$ and $g = M_{2}^{\top}b$. Every ridge fit is therefore a linear smoother. Theorem~\ref{ridge_adjoint} gives the converse direction and the explicit correspondence: the smoother weights $(h,g)$ solve their own quadratic program, in which the ridge penalty on $\beta$ becomes a penalty on how far $(h,g)$ is from reproducing each basis function at $a_{0}$.

Estimating a function at a target point by a linear combination of noisy evaluations elsewhere is classical statistics: it is the Wiener--Kolmogorov filtering problem \citep{wiener1949extrapolation,kolmogorov1941interpolation}, kriging in geostatistics \citep{matheron1963principles,williams2006gaussian}, and the theory of linear smoothers in nonparametric regression \citep{silverman1984spline,buja1989linear}. The equivalence between fitting a penalized regression and optimizing the weights directly is made explicit in the optimal-recovery literature \citep{donoho1994statistical} and in balancing-weight estimators from causal inference \citep{zubizarreta2015stable,bruns2026augmented}. Our contribution is to specialize this machinery to combine multiple score estimates with an affine map.

\section{Affine Post-Processing Theory}

The performance of the base score estimator is captured by its coordinate-wise bias and variance functions $B$ and $V$. Note the estimator $\hat{s}$ has internal randomness, and the input value $U$ is itself a random variable. Throughout, $E[\cdot]$ and $Var(\cdot)$ without subscripts denote expectation and variance over this internal randomness with $(u,a)$ held fixed; averages over the population are always written with an explicit subscript, $E_{U\sim p_{a}}[\cdot]$. At a signal level $a$ and coordinate $j$, we refer to the population average squared bias and the population average variance over $U \sim p_{a}$ as $\tilde{B}$ and $\tilde{V}$:
\begin{align}
B_{j}(u,a) &= E[\hat{s}_{j}(u,a)] - s_{j}(u,a)&V_{j}(u,a) = \text{Var}(\hat{s}_{j}(u,a))\\
\tilde{B}_{j}^{2}(a) &= E_{U \sim p_{a}}[B_{j}^{2}(U,a)]&\tilde{V}_{j}(a) = E_{U \sim p_{a}}[V_{j}(U,a)].
\end{align}

In a score-based diffusion, which is a discrete approximation to a continuous SDE, the terminal KL error between the target $\pi$ and the density produced by sampling can be decomposed into three contributions: discrete step size error, initialization error, and path averaged score error \citep{song2021maximum, lee2022convergence, chen2023sampling, chen2023improved, benton2024nearly, conforti2025kl, wang2024evaluating}. The third source of error is a property of the quality of the score estimator and can be expressed for some weight function $\omega$ as the average of population-wide squared bias and variance,
\begin{align}
\int_{0}^{1}\omega(a)E_{U \sim p_{a}}[E[\|\hat{s}(u,a) - s(u,a)\|^{2}]]da = \sum_{j}\int_{0}^{1}\omega(a)(\tilde{B}_{j}^{2}(a)+\tilde{V}_{j}(a))da.
\end{align}
The score error decomposes by coordinate $j$, and our affine post-processor fits each coordinate $j$ independently. As such, we now suppress the index $j$ from our notation and consider this method as repeated in each coordinate. We define $L^{2}$ error of an affine post-processing vector as the functional
\begin{align}
J(h,g,\nu, a_{0}) = E_{U \sim p_{a}}[E[(\bar{s}(u,a_{0})-s(u,a_{0}))^{2}]]\label{L2}.
\end{align}
\subsection{Oracle Theory: The Wiener Post-Processor}
\label{sec:oracle}

Fix an on-policy signal level $a_{0}$ and denote the on-policy $U$-dependent random variables $B(u) = B(u,a_{0})$, $s(u) = s(u,a_{0})$. Since the terminal scores are observations with zero bias and variance, we stack them into a larger observation vector $\hat{s} = (\hat{s}(u,a_{i})_{i=0}^{N}, s(u,0), s(u,1))\in \mathbb{R}^{N+3}$ with weights $\tilde h = (h_{0},\dots,h_{N},g_{0},g_{1}) \in \mathbb{R}^{N+3}$. Define the random variable $\tilde m(u) \in \mathbb{R}^{N+3}$ and population average on- and off-policy variance $\mathcal{V} \in \mathbb{R}^{(N+3)\times(N+3)}$ as
\begin{align}
\tilde m_{i}(u) &= E[\hat{s}(u,a_{i})] = s(u,a_{i}) + B(u,a_{i}), & \mathcal{V}_{ii} &= E_{U \sim p_{a_{0}}}[V(U,a_{i})], \quad i = 0,\dots,N\\
\tilde m_{N+1}(u) &= s(u,0),\quad \tilde m_{N+2}(u) = s(u,1), & \mathcal{V}_{ii} &= 0,  \quad\quad\quad\quad\quad i = N+1,N+2.
\end{align}
Then the affine post-processing estimator can be expressed as $\bar{s}(u,a_{0}) = \langle \tilde h, \hat{s}(u)\rangle - \nu$ with mean $\langle \tilde h, \tilde m(u) \rangle - \nu$, and we write $J(\tilde h,\nu,a_{0})$.
Since $\tilde{h},\nu$ are signal-level dependent but not $U$ dependent, they factor out of $U$ expectations. As such, $J$ can be expressed as a convex quadratic in $\tilde{h},\nu$ with a single closed form minimizer called the Wiener oracle.

\begin{theorem}\label{wiener_oracle}
Writing $z = (\tilde h, \nu) \in \mathbb{R}^{N+4}$ and $\tilde m = \tilde m(U), s = s(U)$, $J$ is the explicit convex quadratic
\begin{small}
\begin{align}
J(z, a_{0}) &= z^{\top} \mathcal{M}(a_{0})\, z + \mu(a_{0})^{\top} z + E_{U\sim p_{a_{0}}}[s^{2}], \label{J_quad_form}\\
\mathcal{M} &=
\begin{pmatrix}
E_{U\sim p_{a_{0}}}[\tilde m \tilde m^{\top}] + \mathcal{V} & -E_{U\sim p_{a_{0}}}[\tilde m] \\
-E_{U\sim p_{a_{0}}}[\tilde m]^{\top} & 1
\end{pmatrix},
\quad
\mu =
\begin{pmatrix} -2\, \text{Cov}_{U\sim p_{a_{0}}}[\tilde m\, s] \\ 0 \end{pmatrix}.
\label{eq:quad}
\end{align}
\end{small}
The minimum of $J$ is always attained. When $\mathrm{Cov}_{U\sim p_{a_{0}}}(\tilde m) + \mathcal{V}$ is invertible the minimizer is unique and given by the Wiener estimator below; otherwise, replacing the inverse by the Moore--Penrose pseudo-inverse gives the minimum-norm minimizer.
\begin{align}
\tilde h^{*}(a_{0}) &= \big[ \mathrm{Cov}_{U\sim p_{a_{0}}}(\tilde m) + \mathcal{V} \big]^{-1} \mathrm{Cov}_{U\sim p_{a_{0}}}\big(\tilde m, s\big),
\quad\nu^{*}(a_{0}) = \tilde h^{*}(a_{0})^{\top}E_{U\sim p_{a_{0}}}[\tilde m]\label{eq:wiener}.
\end{align}
\end{theorem}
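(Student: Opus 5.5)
The plan is to compute $J$ directly by conditioning on $U$ and splitting into a squared-mean part and a variance part. Then minimize first over the scalar $\nu$ in closed form, which reduces everything to a quadratic in $\tilde h$ alone.

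\textbf{Step 1: expand $J$.} Fix $u$. Writing $\bar s(u,a_0)-s(u)=\langle \tilde h,\hat s(u)\rangle-\nu-s(u)$, the inner expectation over the internal randomness is
\begin{align*}
E\big[(\bar s - s)^2\big] = \big(\langle\tilde h,\tilde m(u)\rangle-\nu-s(u)\big)^2+\mathrm{Var}\big(\langle \tilde h,\hat s(u)\rangle\big).
\end{align*}
I will assume that queries at different levels use independent internal randomness, so their cross-covariances vanish. The terminal entries are deterministic. Under this assumption the variance term is $\sum_{i\le N} h_i^2 V(u,a_i)$. Averaging over $U\sim p_{a_0}$ turns it into $\tilde h^\top\mathcal V\tilde h$.

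\textbf{Step 2: match the quadratic form.} Expanding the squared mean and averaging gives
\begin{align*}
\tilde h^\top E[\tilde m\tilde m^\top]\tilde h-2\nu\,\tilde h^\top E[\tilde m]+\nu^2-2\tilde h^\top E[\tilde m s]+2\nu E[s]+E[s^2],
\end{align*}
with all expectations over $U\sim p_{a_0}$. To reach exactly \eqref{J_quad_form} I use the fact that the population mean of a score vanishes: $E_{U\sim p_{a_0}}[s(U)]=\int\nabla p_{a_0}=0$. This holds because $p_{a_0}$ is a Gaussian convolution, so it is smooth and decays at infinity. Consequently the $2\nu E[s]$ term drops out, and $E[\tilde m s]=\mathrm{Cov}(\tilde m,s)$, which gives $\mu$. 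The matrix $\mathcal M$ is positive semidefinite: its Schur complement with respect to the entry $1$ is $\mathrm{Cov}(\tilde m)+\mathcal V\succeq0$. Hence $J$ is convex.

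\textbf{Step 3: minimize.} For fixed $\tilde h$, $J$ is a strictly convex parabola in $\nu$, minimized at $\nu=\tilde h^\top E[\tilde m]$. Substituting this value leaves
\begin{align*}
\tilde h^\top\big(\mathrm{Cov}(\tilde m)+\mathcal V\big)\tilde h-2\tilde h^\top\mathrm{Cov}(\tilde m,s)+E[s^2].
\end{align*}
When $\mathrm{Cov}(\tilde m)+\mathcal V$ is invertible, the first-order condition gives \eqref{eq:wiener} uniquely. Uniqueness of $\nu$ then follows from the formula above.

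\textbf{Step 4: attainment when the matrix is singular.} This is the main technical point. A convex quadratic $\tilde h^\top A\tilde h-2\tilde h^\top c$ with $A\succeq 0$ attains its minimum if and only if $c\in\mathrm{range}(A)$. The minimum-norm minimizer is then $A^{+}c$. So I need to show $\mathrm{Cov}(\tilde m,s)\perp\ker(\mathrm{Cov}(\tilde m)+\mathcal V)$. Take $w$ in that kernel. Then $w^\top\mathrm{Cov}(\tilde m)w=0$, since both summands are PSD. Hence $w^\top\tilde m(U)$ is almost surely constant, and therefore $w^\top\mathrm{Cov}(\tilde m,s)=\mathrm{Cov}(w^\top\tilde m,s)=0$ by Cauchy--Schwarz. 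This gives attainment and justifies the pseudo-inverse statement, with $\nu^*$ again given by the formula in Step 3.
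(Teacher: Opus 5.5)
Your proof is correct, and its core is the same computation as the paper's. Both arguments use the conditional bias--variance split under independent queries and expand the squared mean. Both use $E_{U\sim p_{a_0}}[s]=0$ to turn $E[\tilde m s]$ into $\mathrm{Cov}(\tilde m,s)$ and to remove the $2\nu E[s]$ term. Both then eliminate $\nu=\tilde h^{\top}E[\tilde m]$, which turns $E[\tilde m\tilde m^{\top}]$ into $\mathrm{Cov}(\tilde m)$. The paper solves the joint stationarity system $\mathcal M z+\mu/2=0$ rather than profiling out $\nu$ first, which amounts to the same thing.

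Where you go beyond the paper is in the parts of the statement its proof never addresses. The paper simply assumes $\mathcal M(a_0)$ has full rank. It does not verify $\mathcal M\succeq 0$, and it says nothing about attainment or the pseudo-inverse in the singular case. Your Schur-complement remark supplies convexity. It also reconciles the two hypotheses: since $\det\mathcal M=\det(\mathrm{Cov}(\tilde m)+\mathcal V)$, ``$\mathcal M$ full rank'' in the paper's proof is exactly the invertibility condition in the statement. Your Step 4 range argument correctly fills the singular case, showing $\ker(\mathrm{Cov}(\tilde m)+\mathcal V)\perp\mathrm{Cov}(\tilde m,s)$ via Cauchy--Schwarz. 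Attainment alone also follows more cheaply from $J\ge 0$, since a quadratic bounded below on $\mathbb{R}^{N+4}$ attains its minimum; your argument additionally identifies the minimizer set $A^{+}c+\ker A$.

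One point to make explicit concerns the sense of ``minimum-norm''. $A^{+}c$ is the minimum-norm minimizer in $\tilde h$, which is what your reduced problem gives. It is not in general the minimum-norm minimizer in $z=(\tilde h,\nu)$, because a kernel direction $w$ can have $w^{\top}E[\tilde m]\neq 0$. For example, for $\pi=N(\mu,I)$ one has $s_j(u,1)-s_j(u,0)=\mu_j$. Then $w=e_{N+2}-e_{N+1}$ lies in the kernel with $w^{\top}E[\tilde m]=\mu_j$. Moving along $w$ changes $\|z\|$ at first order whenever $\mu_j\nu^{*}\neq 0$. The statement's minimum-norm claim should therefore be read in $\tilde h$, which is exactly what you proved.
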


\begin{corollary}\label{off_policy_optimal}
With $\tilde m, s, B$ as $U$ dependent random variables as in Theorem~\ref{wiener_oracle}, let $z_{\mathrm{on}} = (e_{0}, 0)$, with $e_{0} = (1,0,\dots,0) \in \mathbb{R}^{N+3}$, denote the on-policy estimator. The gradient of $J$ at $z_{\mathrm{on}}$ is
\begin{align}
\partial_{\nu}J(z_{\mathrm{on}}, a_{0}) &= -2\,E_{U\sim p_{a_{0}}}[B] \quad\quad \partial_{\tilde h_{0}}J(z_{\mathrm{on}}, a_{0}) = 2\big(\text{Cov}_{U\sim p_{a_{0}}}[B\, s] + J(z_{\mathrm{on}}, a_{0})\big),\\
\partial_{\tilde h_{i}}J(z_{\mathrm{on}}, a_{0}) &= 2\,E_{U\sim p_{a_{0}}}[B\,\tilde m_{i}], \qquad i = 1,\dots,N+2&.
\end{align}
The on-policy estimator is the Wiener optimal solution if and only if
\begin{align}
E_{U\sim p_{a_{0}}}[B] &= 0, \qquad
\mathrm{Cov}_{U\sim p_{a_{0}}}(B, s) = -J(z_{\mathrm{on}}, a_{0}),\\
E_{U\sim p_{a_{0}}}[B\,\tilde m_{i}] &= 0, \qquad i = 1,\dots,N+2.
\end{align}
If any of these fails, some $z = (\tilde h,\nu)$, depending only on the signal level $a_{0}$, strictly reduces the population-average squared bias and variance relative to pure on-policy estimation.
\end{corollary}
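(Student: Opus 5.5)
The plan is to compute the gradient directly from the unexpanded definition of $J$ rather than from the block form \eqref{J_quad_form}. Then I would use convexity, which Theorem~\ref{wiener_oracle} provides, to turn stationarity into a necessary and sufficient optimality condition.

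\textbf{Rewriting $J$.} Conditioning on $U=u$, the estimator $\bar s(u,a_0)=\langle\tilde h,\hat s(u)\rangle-\nu$ has mean $\langle \tilde h,\tilde m(u)\rangle-\nu$. Its internal variance is $\tilde h^\top \mathrm{Var}(\hat s(u))\tilde h$. Averaging over $U\sim p_{a_0}$ and using the bias--variance split gives
\begin{align}
J(z,a_0)=E_{U\sim p_{a_0}}\big[(\langle\tilde h,\tilde m\rangle-\nu-s)^2\big]+\tilde h^\top\mathcal V\tilde h .
\end{align}
Here I use that $\mathcal V$ is diagonal, as defined before Theorem~\ref{wiener_oracle}: the queries at different levels use independent internal randomness and the endpoint entries are deterministic. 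This expression expands to \eqref{J_quad_form}, so it is consistent with Theorem~\ref{wiener_oracle}.

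\textbf{Computing the gradient.} Differentiating under the expectation gives
\begin{align}
\partial_\nu J=-2E[\langle\tilde h,\tilde m\rangle-\nu-s], \qquad \partial_{\tilde h_i}J=2E[(\langle\tilde h,\tilde m\rangle-\nu-s)\tilde m_i]+2(\mathcal V\tilde h)_i .
\end{align}
At $z_{\rm on}$ the residual is $\tilde m_0-s=B$. This gives $\partial_\nu J=-2E[B]$. For $i\ge1$ it gives $\partial_{\tilde h_i}J=2E[B\tilde m_i]$, because $(\mathcal V e_0)_i=\mathcal V_{i0}=0$.

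For $i=0$, using $\tilde m_0=s+B$, I get $2\big(E[Bs]+E[B^2]+\mathcal V_{00}\big)$. Since $J(z_{\rm on})=E[B^2]+\mathcal V_{00}$, this equals $2\big(E[Bs]+J(z_{\rm on})\big)$. Note that $E[Bs]=\mathrm{Cov}(B,s)+E[B]E[s]$, so this matches the displayed formula with $\mathrm{Cov}(B,s)$ exactly when $E[B]=0$. I would either state the $\tilde h_0$-derivative with $E[Bs]$, or remark that the correction term vanishes on the set where $\partial_\nu J=0$. Either way, the zero set of the full gradient is precisely the stated system of conditions.

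\textbf{Optimality and strict improvement.} $J$ is a convex quadratic: $\mathcal M\succeq0$ because $J\ge0$ is a sum of a squared expectation and a PSD form. Hence $z_{\rm on}$ is a global minimizer if and only if $\nabla J(z_{\rm on})=0$, which is the stated system of equations. If any condition fails, set $d=-\nabla J(z_{\rm on})\neq0$. Then
\begin{align}
J(z_{\rm on}+td)=J(z_{\rm on})-t\|d\|^2+t^2d^\top\mathcal M d ,
\end{align}
which is strictly smaller than $J(z_{\rm on})$ for $0<t<\|d\|^2/(d^\top\mathcal M d)$, or for every $t>0$ if $d^\top\mathcal M d=0$. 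Alternatively, the Wiener minimizer from Theorem~\ref{wiener_oracle} strictly improves on $z_{\rm on}$. Either choice depends only on population quantities at $a_0$, not on $u$.

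\textbf{Main obstacle.} The delicate step is the $\tilde h_0$ derivative. It requires reconciling $E[Bs]$ with $\mathrm{Cov}(B,s)$ and relies on $\mathcal V$ being diagonal, which needs independent randomness across query levels. I would make that independence assumption explicit.
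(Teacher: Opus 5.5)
Your proposal is correct and takes essentially the paper's route: compute the gradient of the convex quadratic $J$ at $z_{\mathrm{on}}$, then use convexity. The paper reads the gradient off $2\mathcal M z+\mu$ via its gradient lemma, while you differentiate the residual form directly; this difference is cosmetic.

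One correction: your ``main obstacle'' is not an obstacle. The paper shows in the proof of Theorem~\ref{wiener_oracle}, and recalls in this proof, that $E_{U\sim p_{a_0}}[s]=\int\partial_{u_j}p_{a_0}(u)\,du=0$. Hence $\mathrm{Cov}_{U\sim p_{a_0}}(B,s)=E_{U\sim p_{a_0}}[Bs]$ identically, not only when $E_{U\sim p_{a_0}}[B]=0$, and the displayed $\tilde h_0$-derivative holds at $z_{\mathrm{on}}$ exactly as stated. Your fallback of checking it only on the set where $E_{U\sim p_{a_0}}[B]=0$ would prove less than the corollary claims.
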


\begin{corollary}\label{unbiased_wiener}
Assume the base estimator is unbiased, $E[\hat{s}(u,a)] = s(u,a)$ for all $(u,a)$. If $E_{U\sim p_{a_{0}}}[V(U,a_{0})] > 0$, then $z_{\mathrm{on}}$ is \textbf{not} the Wiener optimal solution: some affine post-processor has strictly smaller population-average squared bias and variance than pure on-policy estimation.
\end{corollary}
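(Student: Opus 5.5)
We derive Corollary~\ref{unbiased_wiener} from Corollary~\ref{off_policy_optimal}. That corollary says $z_{\mathrm{on}}$ is Wiener optimal if and only if the gradient of $J$ vanishes there. So it suffices to show that one of its three conditions fails when $B \equiv 0$ and $\tilde{\mathcal{V}}_{00} := E_{U\sim p_{a_{0}}}[V(U,a_{0})] > 0$. We will check the $\tilde h_{0}$-condition, $\mathrm{Cov}_{U\sim p_{a_{0}}}(B,s) = -J(z_{\mathrm{on}},a_{0})$.

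Under unbiasedness, $B(u,a)=0$ for all $(u,a)$, and in particular $B(u)=B(u,a_{0})=0$. Hence $E_{U\sim p_{a_{0}}}[B]=0$ and $E_{U\sim p_{a_{0}}}[B\tilde m_{i}]=0$ for $i\ge 1$. So the $\nu$-condition and the off-policy conditions all hold trivially, and the only condition that can fail is the $\tilde h_0$ one.

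Next we compute $J(z_{\mathrm{on}},a_{0})$. At $z_{\mathrm{on}}$ we have $\bar s(u,a_{0}) = \hat s(u,a_{0})$. The bias--variance decomposition at fixed $u$ gives $E[(\hat s(u,a_{0}) - s(u,a_{0}))^2] = B(u)^2 + V(u,a_{0}) = V(u,a_{0})$. Averaging over $U\sim p_{a_{0}}$ gives $J(z_{\mathrm{on}},a_{0}) = \tilde{\mathcal{V}}_{00} > 0$. The same value also follows from the quadratic form \eqref{J_quad_form} evaluated at $z_{\mathrm{on}}$: the $E[\tilde m_0^2]$, $-2\mathrm{Cov}(\tilde m_0,s)$ and $E[s^2]$ terms combine with $\tilde m_0=s$ to leave $\mathcal{V}_{00}$.

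Since $\mathrm{Cov}_{U\sim p_{a_{0}}}(B,s)=\mathrm{Cov}(0,s)=0$, the $\tilde h_0$-condition would require $0 = -J(z_{\mathrm{on}},a_{0}) = -\tilde{\mathcal{V}}_{00}$, which is false. Equivalently, $\partial_{\tilde h_0}J(z_{\mathrm{on}},a_0)=2\tilde{\mathcal{V}}_{00}>0$. The gradient is therefore nonzero, so $z_{\mathrm{on}}$ is not a minimizer of the convex quadratic $J$.

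Strict improvement is then explicit. Take $z_t = ((1-t)e_{0},0)$ for small $t>0$, i.e.\ shrink the on-policy estimate toward zero. Then $J(z_t) = \tilde{\mathcal{V}}_{00} - 2t\tilde{\mathcal{V}}_{00} + O(t^2) < J(z_{\mathrm{on}})$. Concretely, $J(z_t) = t^2E_{U\sim p_{a_0}}[s^2] + (1-t)^2\tilde{\mathcal{V}}_{00}$, and this is minimized at $t = \tilde{\mathcal{V}}_{00}/(E_{U\sim p_{a_0}}[s^2]+\tilde{\mathcal{V}}_{00})$, giving a value strictly below $\tilde{\mathcal{V}}_{00}$.

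There is no real obstacle in this argument. The only care needed is that $J(z_{\mathrm{on}})$ is the population variance, and that the quantities involved are finite; for the latter we assume $E_{U\sim p_{a_0}}[s^2]<\infty$, as implicit in Theorem~\ref{wiener_oracle}.
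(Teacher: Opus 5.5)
Your proposal is correct and follows the paper's own argument. You specialize the gradient conditions of Corollary~\ref{off_policy_optimal} to $B\equiv 0$, note that only the $\tilde h_0$-condition can fail, and use $J(z_{\mathrm{on}},a_0)=\mathcal{V}_{0,0}>0$ to get $\partial_{\tilde h_0}J(z_{\mathrm{on}},a_0)=2\mathcal{V}_{0,0}\neq 0$. Your explicit shrinkage rule $z_t=((1-t)e_0,0)$, with $J(z_t)=t^2E_{U\sim p_{a_0}}[s^2]+(1-t)^2\mathcal{V}_{0,0}$, is a correct concrete addition: it exhibits the strictly better post-processor that the paper leaves implicit in the nonvanishing gradient.
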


\subsection{Non-Oracle Post-Processors: Legendre Polynomial Regression}\label{sec:nonoracle}

The Wiener estimator cannot be computed in practice, since it requires knowledge of the score itself as well as other population average quantities which are not known. As such, other sub-optimal choices of $\tilde{h}$ must be implemented, and we set $\nu = 0$ throughout.

Fix an orthogonal functional basis $\{\phi_{k}\}_{k \geq 0}$ on $L^{2}[0,1]$ such as Legendre polynomials. For some finite degree $K$, define the design matrix $\Phi_{i,k} = \phi_{k}(a_{i})$ whose columns are the basis functions evaluated at the signal levels, and the rows are evaluations at the same signal level. We denote the top row (row 0) of the design matrix as the vector, $b = (\phi_{0}(a_{0}),\dots,\phi_{K}(a_{0}))$, the on-policy evaluations of the functional basis. Denote the terminal evaluations of the basis functions as $\Psi \in \mathbb{R}^{2\times(K+1)}$, $\Psi_{0,k} = \phi_{k}(0)$, $\Psi_{1,k} = \phi_{k}(1)$. Write $\hat{\mathbf{s}}(u) = (\hat{s}(u,a_{0}),\dots,\hat{s}(u,a_{N}))$ for the vector of score queries and $v(u) = (s(u,0),s(u,1))$ for the known terminal scores. For weights $(h,g)$, the defect vector $\delta(h,g) = \Phi^{\top}h + \Psi^{\top}g - b \in \mathbb{R}^{K+1}$ measures how far the post-processor is from reproducing each basis function exactly at $a_{0}$.

The affine post-processor can be constructed by fitting a set of coefficients in the functional basis and evaluating at the location $a_{0}$, or fitting the $(h,g)$ vector directly in an adjoint minimization problem.

\begin{theorem}\label{ridge_adjoint}
Assume $\Phi$ has full column rank ($N \geq K$) and $\Psi$ has full row rank. Let $\Sigma \in \mathbb{R}^{(N+1)\times(N+1)}$ be symmetric positive definite, and $\lambda > 0$. Let $\Omega$ be diagonal with $\Omega_{k,k} \in (0,\infty)$ for $k \notin \mathcal{K}$ and $\Omega_{k,k} = \infty$ for $k \in \mathcal{K}$, with the convention $1/\infty = 0$, so that the coefficients indexed by $\mathcal{K} \subset \{0,\dots,K\}$ are unpenalized. Consider ridge regression constrained to reproduce the terminal scores,
\begin{align}
\beta^{*}(u) = \operatorname*{arg\,min}_{\beta \in \mathbb{R}^{K+1}}\ \big(\hat{\mathbf{s}}(u) - \Phi\beta\big)^{\top}\Sigma^{-1}\big(\hat{\mathbf{s}}(u) - \Phi\beta\big) + \lambda\,\beta^{\top}\Omega^{-1}\beta \quad \text{s.t.}\quad \Psi\beta = v(u),
\end{align}
and its adjoint problem
\begin{align}
(h^{*},g^{*}) = \operatorname*{arg\,min}_{h\in\mathbb{R}^{N+1},\,g\in\mathbb{R}^{2}}\ \sum_{k\notin\mathcal{K}}\Omega_{k,k}\,\delta_{k}(h,g)^{2} + \lambda\,h^{\top}\Sigma h \quad \text{s.t.}\quad \delta_{k}(h,g) = 0,\ k\in\mathcal{K}. \label{adjoint_prob}
\end{align}
Both problems have unique solutions, and for every realization of $\hat{\mathbf{s}}(u)$,
\begin{align}
\bar{s}(u,a_{0}) = \sum_{k=0}^{K}\beta^{*}_{k}(u)\,\phi_{k}(a_{0}) = \sum_{i=0}^{N}h^{*}_{i}\,\hat{s}(u,a_{i}) + g^{*}_{0}\,s(u,0) + g^{*}_{1}\,s(u,1). \label{eq:affine_combination}
\end{align}
Writing $A = \Phi^{\top}\Sigma^{-1}\Phi + \lambda\,\Omega^{-1}$, $G = \Psi A^{-1}\Psi^{\top}$ and $P = I - A^{-1}\Psi^{\top}G^{-1}\Psi$, the solutions are
\begin{align}
\beta^{*}(u) &= P A^{-1}\Phi^{\top}\Sigma^{-1}\,\hat{\mathbf{s}}(u) + A^{-1}\Psi^{\top}G^{-1}\,v(u),\\
h^{*} &= \Sigma^{-1}\Phi A^{-1}P^{\top} b, \qquad
g^{*} = G^{-1}\Psi A^{-1} b. \label{eq:ridge_closed}
\end{align}
\end{theorem}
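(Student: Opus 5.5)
The plan is to solve both problems in closed form by Lagrangian (KKT) conditions and then check that the two resulting linear functionals of the data coincide. Unpenalized coordinates $k\in\mathcal{K}$ need care. I will first treat $\Omega^{-1}$ as a positive semidefinite diagonal matrix with zeros on $\mathcal{K}$. The key preliminary fact is that $A=\Phi^{\top}\Sigma^{-1}\Phi+\lambda\Omega^{-1}$ is positive definite: $\Phi$ has full column rank and $\Sigma^{-1}\succ 0$, so $\Phi^{\top}\Sigma^{-1}\Phi\succ0$, and adding a PSD term keeps it so. Since $\Psi$ has full row rank, $G=\Psi A^{-1}\Psi^{\top}\succ 0$ is also invertible. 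These two facts give uniqueness for the primal problem: it has a strictly convex objective on an affine set, and the set is nonempty because $\Psi$ is surjective.

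\textbf{Primal.} Stationarity of the Lagrangian gives $A\beta-\Phi^{\top}\Sigma^{-1}\hat{\mathbf{s}}+\Psi^{\top}\eta=0$ together with $\Psi\beta=v$. Eliminating $\eta$ gives $\eta=G^{-1}(\Psi A^{-1}\Phi^{\top}\Sigma^{-1}\hat{\mathbf{s}}-v)$. Substituting back yields exactly $\beta^{*}=PA^{-1}\Phi^{\top}\Sigma^{-1}\hat{\mathbf{s}}+A^{-1}\Psi^{\top}G^{-1}v$. Then $\langle b,\beta^{*}\rangle=\langle \Sigma^{-1}\Phi A^{-1}P^{\top}b,\hat{\mathbf{s}}\rangle+\langle G^{-1}\Psi A^{-1}b,v\rangle$. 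This identifies the claimed $(h^{*},g^{*})$ as the smoother weights and gives \eqref{eq:affine_combination}, provided these weights are the unique adjoint minimizer.

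\textbf{Adjoint.} Rewrite the adjoint problem as a generalized least squares problem in $\delta$. This is the main obstacle, because the infinite weights on $\mathcal{K}$ turn into hard constraints. I would handle it by working with finite $\Omega_{k,k}=t$ for $k\in\mathcal{K}$ and letting $t\to\infty$. The primal formulas depend continuously on $\Omega^{-1}$, and $A$ stays invertible at the limit, so the two sides converge together. Alternatively, one can directly introduce multipliers for $\delta_k=0$, $k\in\mathcal{K}$.

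Concretely, with finite positive weights the adjoint objective $\delta^{\top}\Omega\delta+\lambda h^{\top}\Sigma h$ has two stationarity conditions. In $g$ it gives $\Psi\Omega\delta=0$. In $h$ it gives $\Phi\Omega\delta+\lambda\Sigma h=0$, so $h=-\lambda^{-1}\Sigma^{-1}\Phi w$ with $w=\Omega\delta$ and $\Psi w=0$. Substituting into $\delta=\Phi^{\top}h+\Psi^{\top}g-b$ gives $\lambda\Omega^{-1}w=-\Phi^{\top}\Sigma^{-1}\Phi w+\Psi^{\top}g-b$, that is, $Aw=\Psi^{\top}g-b$ after rescaling $w\mapsto \lambda^{-1}$-factors appropriately. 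I expect the bookkeeping of the $\lambda$ factor to need a careful check here. This relation together with $\Psi w=0$ gives $g=G^{-1}\Psi A^{-1}b$ and $w=-A^{-1}(I-\Psi^{\top}G^{-1}\Psi A^{-1})b=-P^{\top}$-type expression. Since $P^{\top}=I-\Psi^{\top}G^{-1}\Psi A^{-1}$, we have $A^{-1}P^{\top}=PA^{-1}$ by symmetry of $A$. Hence $h=\Sigma^{-1}\Phi A^{-1}P^{\top}b$, matching \eqref{eq:ridge_closed}.

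In the infinite-weight formulation, $w_k=\Omega_{kk}\delta_k$ for $k\in\mathcal{K}$ is replaced by the free multiplier of the constraint $\delta_k=0$, and the equation $\Omega^{-1}w=\delta$ holds with $\Omega^{-1}_{kk}=0$. The same algebra then goes through verbatim.

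Uniqueness for the adjoint problem comes from strict convexity in $h$, since $\Sigma\succ0$, combined with the observation that for fixed $h$ the objective is strictly convex in $g$ whenever it depends on $g$. More cleanly, uniqueness follows because the KKT system just solved has a unique solution, and the problem is convex, so KKT conditions are sufficient.
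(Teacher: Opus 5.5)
Your proposal is correct and takes essentially the same route as the paper: KKT for the constrained ridge problem, and for the adjoint the multiplier formulation in which $w$ is free on $\mathcal{K}$ and $\delta=\Omega^{-1}w$, eliminating $h$ and then $w$ to reach $Gg=\Psi A^{-1}b$. Two small repairs: the substitution actually gives $Aw=\lambda(\Psi^{\top}g-b)$, so $\lambda$ cancels exactly in $h=-\lambda^{-1}\Sigma^{-1}\Phi w$ (this is the paper's $w=-\lambda A^{-1}P^{\top}b$); and uniqueness of the adjoint minimizer also needs KKT \emph{necessity}, which is automatic for affine constraints and which the paper states explicitly, in addition to the sufficiency you cite.
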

\begin{proof}
    See a further discussion of the ridge regression theory in Appendix \ref{lin_reg_theory}, and a full proof in Lemma \ref{dropped_penalty} which follows from the KKT conditions for constrained convex minimization problems.
\end{proof}

In practice we take $\Omega$ and $\Sigma$ diagonal, with $\mathcal{K} = \{0\}$ so the fit reproduces constants exactly, and $\Omega_{k,k} = k^{-3}$ for $k \geq 1$, so that $\lambda\sum_{k\geq 1}k^{3}\beta_{k}^{2}$ acts as a roughness penalty on the fitted polynomial. For $\Sigma$ we use either $\Sigma = I$, which costs nothing but assumes equal variance at every design level, or the diagonal estimate $\hat{\Sigma}_{i,i}(u) = \widehat{\mathrm{Var}}(\hat{s}(u,a_{i}))$ from repeated calls to the base estimator, which turns the fit into inverse-variance weighted regression at the cost of extra queries. With $V_{i} \in \{1, \hat{\Sigma}_{i,i}\}$ the adjoint problem \eqref{adjoint_prob} becomes
\begin{align}
\min_{h \in \mathbb{R}^{N+1},\, g \in \mathbb{R}^{2}}\ \sum_{k=1}^{K} \frac{1}{k^{3}}\Big(\sum_{i=0}^{N} h_{i}\phi_{k}(a_{i}) + g_{0}\phi_{k}(0) + g_{1}\phi_{k}(1) - \phi_{k}(a_{0})\Big)^{2} + \lambda\sum_{i=0}^{N} V_{i}h_{i}^{2} \label{eq:practical_qp}\\
\text{s.t.}\quad \sum_{i=0}^{N} h_{i}\phi_{k}(a_{i}) + g_{0}\phi_{k}(0) + g_{1}\phi_{k}(1) = \phi_{k}(a_{0}), \qquad k = 0. \nonumber
\end{align}
The solution $(h^{*},g^{*})$ depends on $a_{0}$, the design levels and $V$, but not on the observed values $\hat{\mathbf{s}}(u)$ or $v(u)$: it is a fixed rule applied to every score query at signal level $a_{0}$, which is clear from the adjoint form and hidden in the $\beta$ form. Appendix~\ref{lin_reg_theory} gives the full ridge-regression derivation.

\section{Experiments}

\label{sec:experiments-baselines}
We evaluate post-processing along three axes, each stressed by its own family of
targets (defined in Appendix~\ref{app:targets}). Section~\ref{sec:unimodal} tests
\emph{ill-conditioning}, on a Gaussian $\mathcal{N}(0,\Lambda_\kappa)$ with eigenvalues
log-spaced over $[1,\kappa]$, and \emph{geometry}, on a banana-shaped (twisted) Gaussian, a
zero-mean mixture of Gaussians with orthogonally oriented covariances, and an ill-conditioned
ellipsoidal shell. Section~\ref{sec:multimodal} tests
\emph{multimodality}, on a two-component anisotropic mixture under increasing separation,
an eight-mode mixture of Gaussians with unequal widths at the vertices of a rotated
hypercube, and a $40$-mode mixture of narrow isotropic Gaussians. In both sections we demonstrate improved dimension dependence.

\paragraph{Baselines and protocol.} We consider \emph{diffusion-based}
samplers including RDMC \citep{huang2024reverse} and 
IS as the score estimator, with and without post-processing. Due to the poor dimension dependence of the rejection sampling estimator ZODMC \citep{he2024zeroth} we skip this score estimator in our analysis. The samplers share the same $M=20$-step reverse loop in \eqref{eq:em}, in order to isolate the score estimation. We also test \emph{classical} samplers: unadjusted Langevin (LMC), no U-turn sampler (NUTS), and parallel tempering (PT). All samplers are compute-matched to the same number of gradient evaluations ($60$K per score query, $1.2$M per particle) and hyperparameters are detailed in Table~\ref{tab:protocol}. Unless stated otherwise, we report the Wasserstein distance ($W_2$)
averaged over $3$ seeds, with error bars for the minimum and maximum across seeds.

\begin{figure}
\centering
\includegraphics[width=\textwidth]{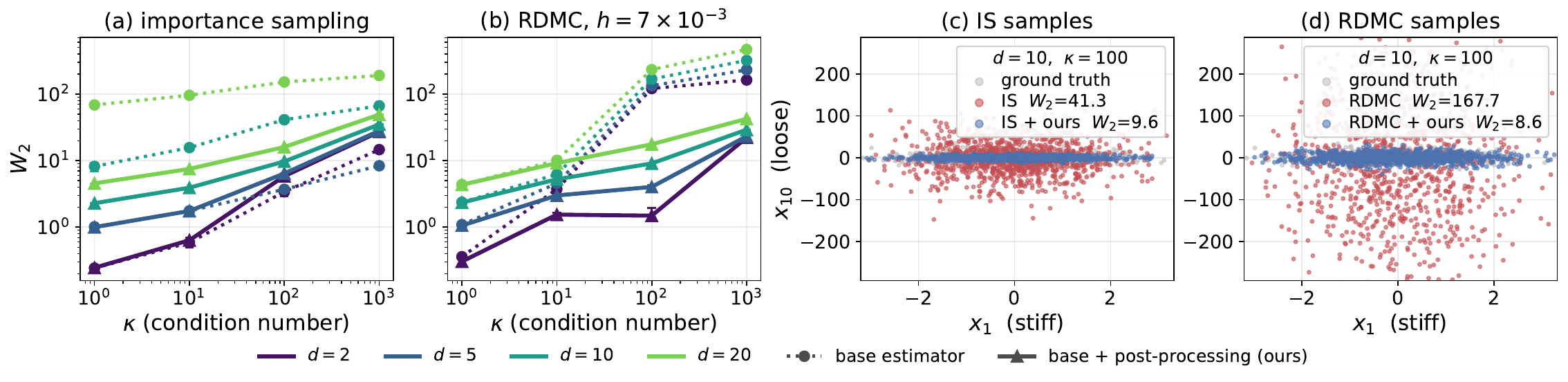}
\vspace{-\baselineskip}
\caption{(a),(b) Mean W2 for an Ill-conditioned Gaussian against the condition number $\kappa$. Post-processing improves dependency against dimension and $\kappa$. (c), (d) The base estimators generate scattered samples projected onto two dimensions, our approach recovers the true distribution.}
\label{fig:logconcave-grid}
\end{figure}
\subsection{Post-Processing improves score estimation for unimodal targets}\label{sec:unimodal}
Although diffusion-based samplers excel at sampling from low-dimensional multimodal targets \citep{he2024zeroth}, in high-dimensional unimodal situations their performance can be underwhelming. We demonstrate that post-processing can address this issue and that its advantages grow with dimension.

\textbf{Improved condition number dependence:}
On the ill-conditioned Gaussian we sweep both dimension and condition number. Figure~\ref{fig:logconcave-grid} reports the results of this experiment. For both IS and RDMC post-processing increases the accuracy of the generated samples, at high dimensions showing an order of magnitude reduction in $W_{2}$ and noticeably more concentrated samples to the ground truth density.

We recall the $L^{2}$ error function of an affine post-processor $J$ defined in \eqref{L2}. In Figure~\ref{fig:j-logconcave} we plot the average across dimension $J$ for the base estimator, the Wiener oracle post-processing (maximal signal that could be recovered) and compare it to our post-processed score. We find that our estimator can recover a significant portion of the Wiener oracle. Interestingly, different base estimators have distinct $J$ profiles depending on the time region where the score is hardest to estimate.

\begin{figure}
\vspace{-\baselineskip}
\centering
\includegraphics[width=\textwidth]{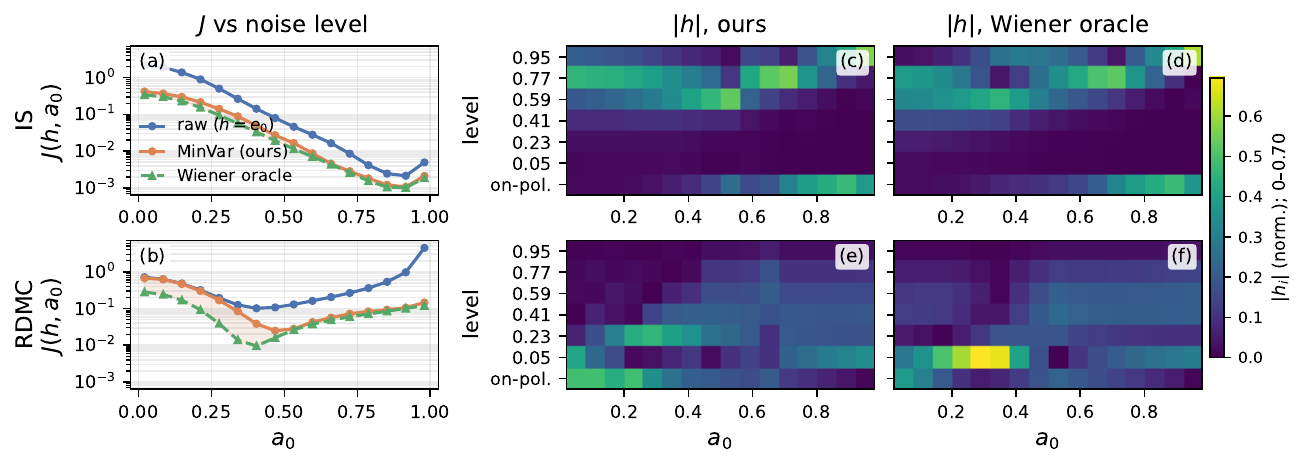}
\vspace{-\baselineskip}
\caption{Score error for a log-concave target at $d=10$, $\kappa=10$. \textbf{Left:} Average score error, for the base estimator, ours and the Wiener oracle. \textbf{Right:} Normalized weight of linear smoother vector $|h|$, each column is a vector, optimal allocations do not place all the weight on a single point.}
\label{fig:j-logconcave}
\end{figure}
\textbf{Improved geometry adherence:} We next test the three geometry targets, whose mass
concentrates on non-elliptical regions. Classical chains mix well on such targets, so we would expect them to outperform a diffusion sampler. The main comparison for these examples is diffusion sampling with and without post processing.

\begin{figure}
\centering
\includegraphics[width=\textwidth]{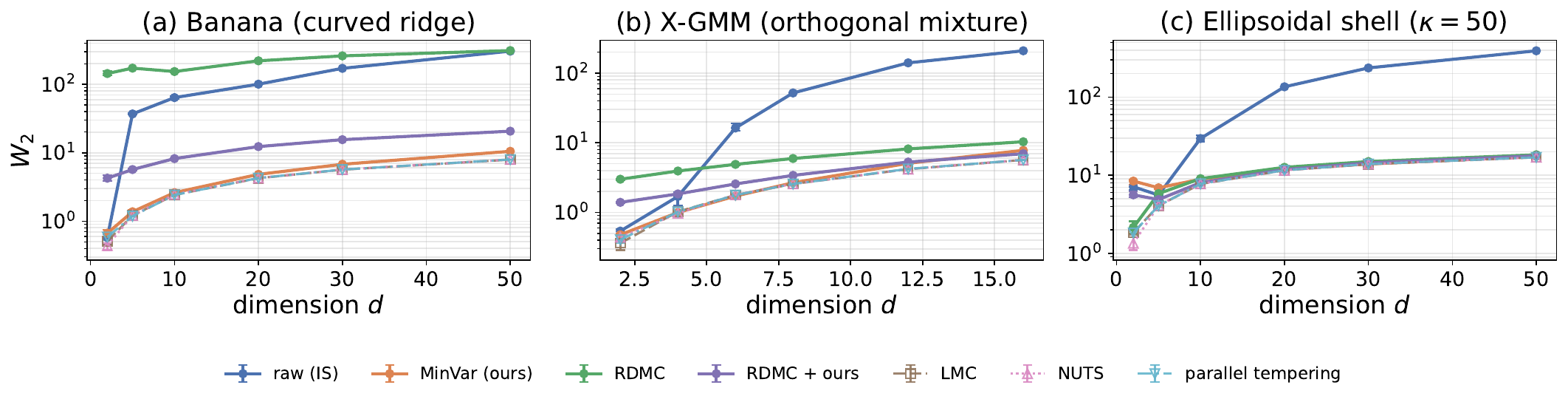}
\vspace{-\baselineskip}
\caption{Geometry targets, $W_2$ against dimension. The base diffusion samplers degrade with
dimension; post-processing closes the gap to the classical methods.}
\label{fig:geometry-dim-sweep}
\end{figure}

Figure~\ref{fig:geometry-dim-sweep} shows that the  IS base diffusion sampler fails on all three
targets, and fails more as the dimension grows. RDMC performs poorly on the Banana target and better on the other two. For instance, at the largest dimension, $W_{2}$ distance of IS is more than an order of magnitude worse than the classical methods on every target, 
and RDMC is as far off on the banana. Post-processing the same estimators closes this gap: at every dimension both samplers match the classical methods, with the relative improvements growing with the dimension.
This experiment demonstrates that although the base estimators may not explore the space well enough to estimate the score accurately, they 
carry enough information to \emph{construct} an accurate estimation when post-processed. 
Figure~\ref{fig:geometry-j} confirms this by looking at the score error itself on the
orthogonal mixture and the $40$-mode mixture of Section~\ref{sec:multimodal}:
post-processing lowers it at almost every noise level, and captures
most of the improvement available when compared to the Wiener oracle.

\begin{figure}
\centering
\includegraphics[width=\textwidth]{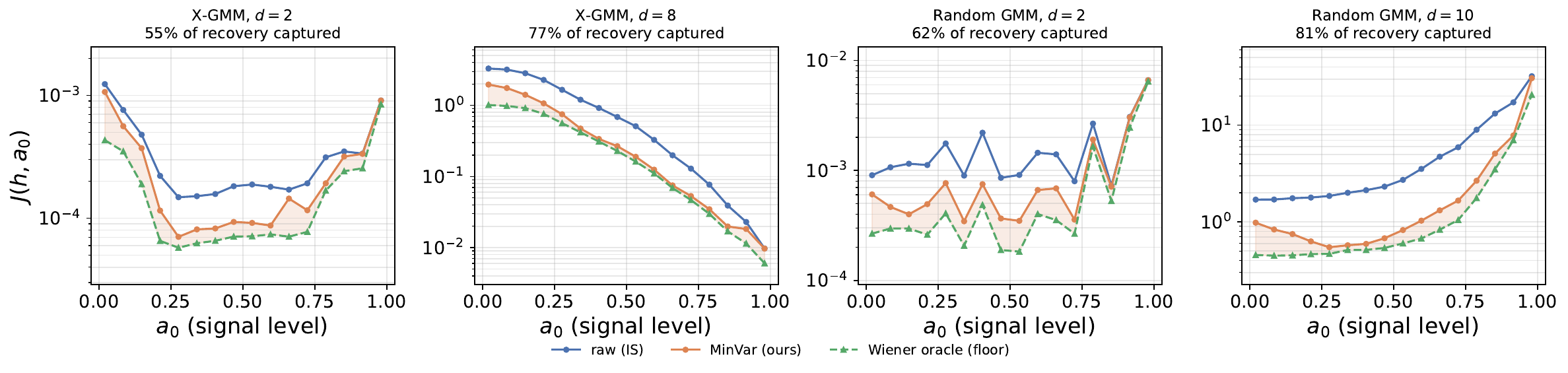}
\vspace{-\baselineskip}
\vspace{-\baselineskip}
\caption{\textit{Average score error.} When compared to the base estimator, post-processing reduces the score estimation error. The orange band denotes possible improvement given the evaluated samples.}
\label{fig:geometry-j}
\vspace{-\baselineskip}
\end{figure}

\subsection{Post-Processing improves score estimation for multimodal targets}\label{sec:multimodal}

\textbf{Improved dimension dependence on multimodal targest:} We consider two multimodal targets: a specifically chosen 8-mode GMM target with separated modes, and a 40-mode GMM with randomly generated modes. On the eight-mode  mixture we sweep $d$ from $10$ to $100$ using $1{,}000$ particles. We report $W_2$ on the $7$-dimensional span of their centers
to better capture the mode structure.

Figure~\ref{fig:vertex-dim} (a,b) shows the result. Raw importance sampling degrades quickly
with dimension, while post-processing the same estimator keeps its $W_2$ roughly constant
from $d=10$ to $d=100$, furthermore, plotting the first two PCA components of the generated samples demonstrate that all modes were sampled from although not uniformly. 
On the other hand, RDMC suffers from metastability on this target and struggles to sample from each mode. In such a case, post-processing is not able to improve the generation as the space does not get explored appropriately as with IS. Panels (c) and (d) repeat this analysis on the $40$-mode mixture, whose modes are closer together but more numerous. Plotting the first two coordinates at $d=30$ shows that both IS and RDMC place almost no samples on the modes, while their post-processed versions do 
and stay on par with the classical chains.

\begin{figure}
\centering
\includegraphics[width=.95\textwidth]{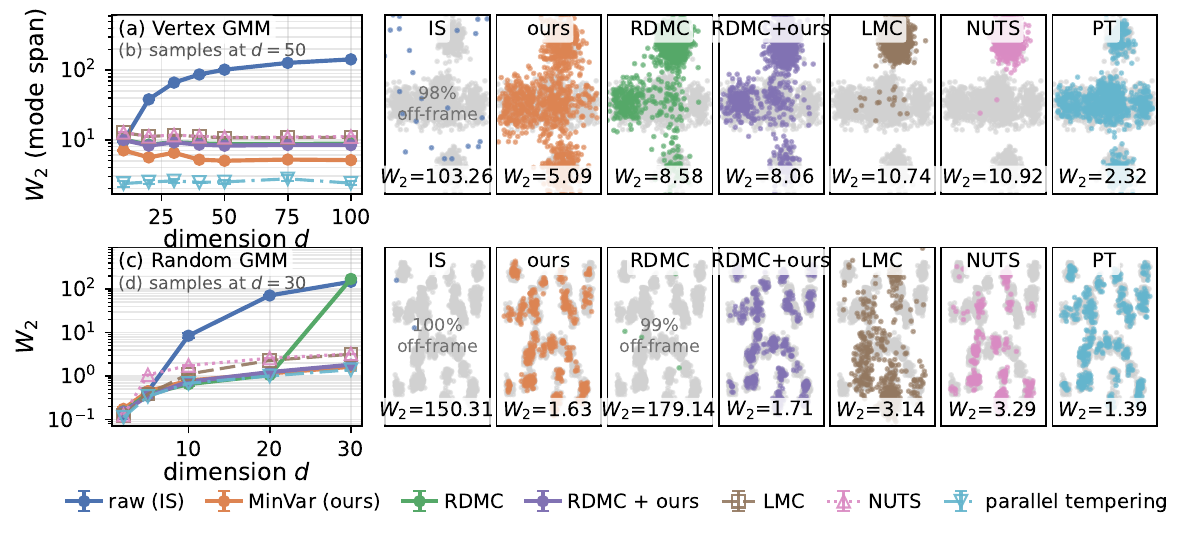}
\vspace{-\baselineskip}
\caption{\textit{$W_2$ against dimension and  samples.}
(a,b) the hypercube mixture and (c,d) the $40$-mode GMM. In high dimensions base
diffusion samplers miss most modes; post-processing finds them.}
\label{fig:vertex-dim}
\end{figure}

\paragraph{Where the method stops improving upon the base estimator.} 

We evaluate the sampler's ability to sample under increasing 
mode \emph{separation}, using the two-component anisotropic mixture with $0.7/0.3$ weights and
sweeping the Mahalanobis distance between its modes. At lower levels 
of separation our method significantly outperforms the base sampler. We also see the post processing vector is spreading mass across signal levels at low separation levels in Figure~\ref{fig:sep-separation} (b) and (c). However, as observed in Figure~\ref{fig:sep-separation} panels (d) and (e), as the separation increases, the Wiener oracle estimation collapses to the on-policy estimator. In this case, spending the full estimation budget on-policy offers slight improvement over splitting the budget and post-processing: Figure~\ref{fig:sep-separation} (a) shows on-policy computation matched estimation with lower $W_{2}$ levels.

\vspace{-\baselineskip}
\begin{figure}
\centering
\includegraphics[width=\textwidth]{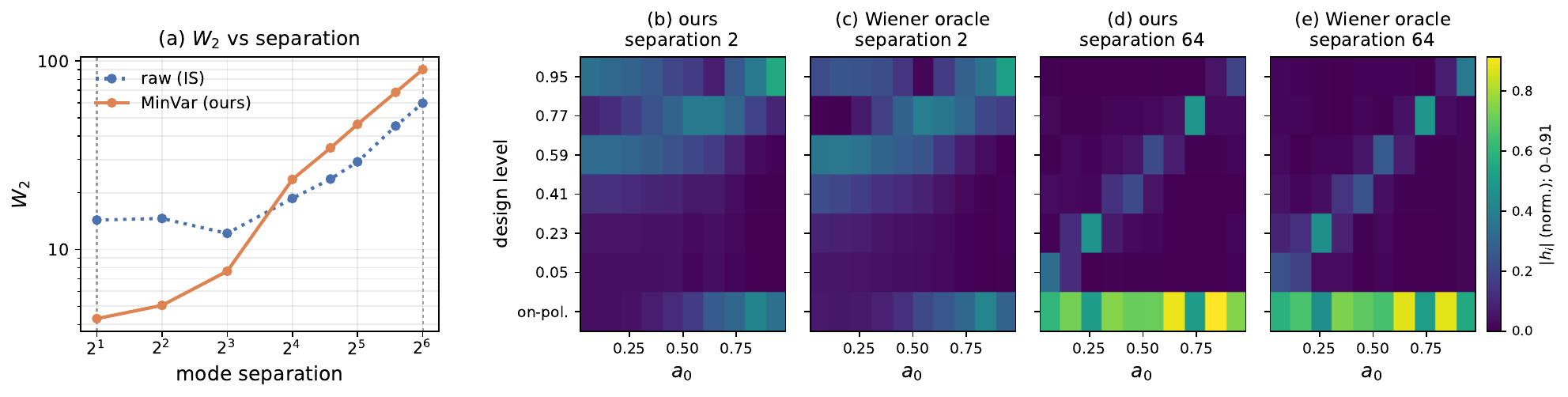}
\vspace{-\baselineskip}
\caption{Two-component mixture at $d=10$. (a) $W_2$ against mode separation. (b)--(e) Weights
$|h_i|$ per design level (rows) and query level $a_0$ (columns). At separation $64$ the Wiener
oracle puts almost all its weight on the on-policy query, so no affine post-processor can
improve on it.}
\label{fig:sep-separation}
\end{figure}

\section{Conclusion and Future Works}

We introduce an affine post-processing approach to score estimation.
We demonstrate, with a random score estimator with nonzero bias
and variance, on-policy estimation is almost never the best use of a fixed computation
budget for score estimation. Splitting a compute budget across queries and fitting an affine combination of the values gives a lower error. This allows us to scale diffusion-based approaches to higher dimensions, while retaining the benefits of the score-based samplers. 

Post-processing is an additional step built on top of a base score estimator. When the base queries become so corrupted they supply little useful off-policy information, the optimal post-processing rule can collapse to pure on-policy estimation. In this case, the sample budget spent on off-policy queries is not utilized, and some small improvement is seen by spending the full estimation budget on-policy to slightly increase the effective sample size. For this reason, improving the base estimator's exploration of the space is an area that still admits
further exploration. Future work includes choosing the design
levels adaptively, and deriving theoretical bounds for score estimation error for the practical, non-oracle rule.

\newpage

\subsection*{AI use statement}

Generative AI was utilized for verifying proofs, formulating mathematical claims, assisting in the writing of proofs, suggesting edits to mathematical formulations, and designing or providing feedback on research methodology and experiments. All theorems and proofs originated as constructs by the authors, with AI-suggested edits and changes added after the fact. Proofs are verified as correct by the authors and also independently verified by AI checks. Proofs are never copied entirely from AI sources without modification. Generative AI proofs are read by the authors as secondary sources, and relevant suggestions incorporated into a preexisting human generated proof as an improvement for clarity.

The decomposition of the population-wide error into a quadratic form in Theorem 1 was first constructed by the authors, but AI assisted literature review suggested the name of the Wiener form and connection to past literature.

Additionally, Generative AI was used for summarizing or analyzing existing literature, discovering research topics or identifying gaps, brainstorming, sourcing/searching for information, editing 
to improve readability, and identifying relevant literature.

Generative AI was not utilized to generate synthetic data sets, help develop theoretical models or conceptual frameworks, provide critical ingredients for proving mathematical claims, propose or refine hypotheses, assist with translation, clean and reformat datasets, support qualitative and thematic data analysis, or interpret results. The distinction between ``provide critical ingredients for proving mathematical claims'' and ``assist in the writing of proofs'' for the authors is that no mathematical formulations originated as a purely Generative AI construct, but all contributions added on top of an original human generated concept or proof as a refinement.

For the experimental code, the authors first wrote a base implementation of the estimators, samplers, and evaluation pipeline. Generative AI was subsequently used to write the scripts that launch and organize experimental runs (e.g., sweeps over hyperparameters, seeds, and problem dimensions), to revise the existing code base, and to produce the plotting code. All AI-generated or AI-modified code was reviewed by the authors. Generative AI was also used as a writing aid, offering edits to improve the clarity and readability of the text. 

We take responsibility for the final content of this work, including text, claims or artifacts produced with the aid of generative AI.

\subsubsection*{Acknowledgments}
We gratefully acknowledge the support of the NSF for FODSI through grant DMS-2023505, of the NSF and the Simons Foundation for the Collaboration on the Theoretical Foundations of Deep Learning through awards DMS-2031883 and 814639, and of the ONR through MURI award N000142112431, and a research grant from Bridgewater to the Simons Institute.

\bibliography{affine_pp_refs}
\bibliographystyle{plainnat}

\appendix
\section{Appendix}
\subsection{Oracle Smoother Theory Proofs and Supporting Lemmas}

\subsubsection*{Proof of Theorem \ref{wiener_oracle}}
\begin{proof}
Fix the on-policy signal level $a_{0}$, the grid of off-policy signal levels $(a_{i})_{i=1}^{N+2}$, and a coordinate $j$ whose index is suppressed in the notation throughout the proof. Note that $z$ is not a $U$ dependent object. Denote random variables for the score estimates as $\hat{s}_{i}(u) = \hat{s}(u,a_{i})$, and let $s(u) = s(u,a_{0})$ be the true score.
The expected $L^{2}$ error for the affine post-processing estimator is then
\begin{align}
    J(z, a_{0}) = E_{U \sim p_{a_{0}}}[E[(s(U) - z^{T}\begin{pmatrix} \hat{s}(U)\\ -1 \end{pmatrix})^{2}]],
\end{align}
where the outer expectation is over locations $U$ distributed according to the marginal $p_{a_{0}}$, and the inner expectation is over the internal randomness of the score estimator at a fixed $U$ value. Assume at a fixed $U$ all $\hat{s}_{i}(U)$ are independent queries to the score estimator and are independent random variables. Perform a bias-variance decomposition on the inner expectation over internal randomness,
\begin{align}
    E_{U \sim p_{a_{0}}}[(s(U) - z^{T}\begin{pmatrix} \tilde{m}(U)\\ -1 \end{pmatrix})^{2}+\text{Var}(\tilde{h}^{T}\hat{s})].
\end{align}
By the independence of the $\hat{s}$ estimators the variance of the sum decomposes into the sum of the variances. Expand the squared bias term and we have expression,
\begin{align}
    E_{U \sim p_{a_{0}}}[z^{T}\begin{pmatrix} \tilde{m}(U)\\ -1 \end{pmatrix}\begin{pmatrix} \tilde{m}(U)\\ -1 \end{pmatrix}^{T}z-2 z^{T}\begin{pmatrix} s(U)\tilde{m}(U)\\ -s(U)\end{pmatrix}+s(U)^{2}+h^{T}V(u,a_{i})h].
\end{align}
Bring the non-$U$ dependent object $z$ outside the $U$ expectation and this is expressed as a quadratic in $z$,
\begin{align}
&z^{T}\begin{pmatrix}
E_{U\sim p_{a_{0}}}[\tilde m \tilde m^{\top}] + \mathcal{V} & -E_{U\sim p_{a_{0}}}[\tilde m] \\
-E_{U\sim p_{a_{0}}}[\tilde m]^{\top} & 1
\end{pmatrix}z+z^{T} \begin{pmatrix}
    -2 E_{U \sim p_{a_{0}}}[s \tilde{m}]\\
    2 E_{U \sim p_{a_{0}}}[s]
\end{pmatrix}+E_{U \sim p_{a_{0}}}[s^{2}]\\
&= z^{T}\mathcal{M}(a_{0})z+z^{T}\mu(a_{0})+E_{U \sim p_{a_{0}}}[s^{2}].
\end{align}
Note the score is mean 0 with respect to its own density by fundamental theorem of calculus,
\begin{align}
    E_{U \sim p_{a_{0}}}[s]&= \int \partial_{u_{j}}p_{a_{0}}(u)du = 0
\end{align}
as such the first vector of terms in $\mu(a_{0})$ is a proper covariance, a product with a mean 0 random variable, and the final term is 0.

Assuming $\mathcal{M}(a_{0})$ is a full rank matrix, the expression is strictly convex function of $z$ and has a unique minimizer. Taking the $z$ gradient and setting it to zero gives condition for the critical point
\begin{align}
    \mathcal{M}(a_{0})z+\frac{1}{2}\mu(a_{0})&=0,
\end{align}
which gives conditions on $\tilde{h}, \nu$ of
\begin{align}
    (E_{U \sim p_{a_{0}}}[\tilde{m}\tilde{m}^{T}]+\mathcal{V})\tilde{h}-E_{U \sim p_{a_{0}}}[\tilde{m}]\nu-\text{Cov}_{U \sim p_{a_{0}}}[\tilde{m},s]&=0 \label{critical_1}\\
    - \tilde{h}^{T}E_{U \sim p_{a_{0}}}[\tilde{m}]+\nu &=0.
\end{align}
Solving for $\nu$ in terms of $\tilde{h}$ we must have
\begin{align}
    \nu&= E_{U \sim p_{a_{0}}}[\tilde{m}]^{T}\tilde{h}.
\end{align}
Substituting this expression for $\nu$ back into equation \eqref{critical_1} turns the second moment term $E_{U \sim p_{a_{0}}}[\tilde{m}\tilde{m}^{T}]$ into  a proper covariance term $\text{Cov}_{U \sim p_{a_{0}}}[\tilde{m}]$, which yields solution for $\tilde{h}$ as
\begin{align}
    \tilde{h}^{*}&= \big[ \mathrm{Cov}_{U\sim p_{a_{0}}}(\tilde m) + \mathcal{V} \big]^{-1} \mathrm{Cov}_{U\sim p_{a_{0}}}\big(\tilde m, s\big).
\end{align}
\end{proof}

\begin{lemma}[Gradient of $J$]
For any $\tilde{h} \in \mathbb{R}^{N+3}$ and $\nu \in \mathbb{R}$, the partial derivatives of $J$ are,
\begin{align}
\partial_{\tilde{h}_{i}}J(\tilde{h},\nu,a_{0}) &= 2\Big(\sum_{j=0}^{N+2}E_{U \sim p_{a_{0}}}[\tilde{m}_{i}\tilde{m}_{j}]\,\tilde{h}_{j} + \mathcal{V}_{i,i}\tilde{h}_{i} - E_{U \sim p_{a_{0}}}[\tilde{m}_{i}]\,\nu - \text{Cov}_{U \sim p_{a_{0}}}[\tilde{m}_{i}, s]\Big), \\
\partial_{\nu}J(\tilde{h},\nu,a_{0}) &= -2\Big(\sum_{j=0}^{N+2}E_{U \sim p_{a_{0}}}[\tilde{m}_{j}]\,\tilde{h}_{j} - \nu \Big).
\end{align}
Since $J$ is a convex quadratic, $(h,\nu)$ is the Wiener estimator if and only if all $N+4$ derivatives vanish.
\end{lemma}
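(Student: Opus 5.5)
The plan is to differentiate the explicit quadratic form of Theorem~\ref{wiener_oracle} directly. Writing $J(z,a_{0}) = z^{\top}\mathcal{M}z + \mu^{\top}z + E_{U\sim p_{a_{0}}}[s^{2}]$ with $\mathcal{M}$ symmetric, the gradient is $\nabla_{z}J = 2\mathcal{M}z + \mu$. I will then read off its components using the block structure of $\mathcal{M}$ and $\mu$ in \eqref{eq:quad}.

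For the $\tilde h_{i}$ components, the first block row of $\mathcal{M}$ is $\big(E[\tilde m\tilde m^{\top}] + \mathcal{V},\ -E[\tilde m]\big)$. Since $\mathcal{V}$ is diagonal, row $i$ applied to $z=(\tilde h,\nu)$ gives
\begin{align}
\sum_{j}E_{U\sim p_{a_{0}}}[\tilde m_{i}\tilde m_{j}]\tilde h_{j} + \mathcal{V}_{ii}\tilde h_{i} - E_{U\sim p_{a_{0}}}[\tilde m_{i}]\nu .
\end{align}
Doubling this and adding $\mu_{i} = -2\,\mathrm{Cov}_{U\sim p_{a_{0}}}[\tilde m_{i},s]$ yields the first formula. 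The last row of $\mathcal{M}$ is $(-E[\tilde m]^{\top},1)$ and the last entry of $\mu$ is $0$, so $2(-E[\tilde m]^{\top}\tilde h + \nu)$ gives $\partial_{\nu}J$. Here I should remember that the $\nu$-entry of $\mu$ vanishes only because $E_{U\sim p_{a_{0}}}[s]=0$, as shown in the proof of Theorem~\ref{wiener_oracle}. I will cite that fact, since the unsimplified derivation has an extra $2E[s]$ term.

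For the final claim, I will note that $\mathcal{M}$ is positive semidefinite: from the derivation of Theorem~\ref{wiener_oracle}, $z^{\top}\mathcal{M}z = E_{U}\big[(\tilde h^{\top}\tilde m - \nu)^{2}\big] + \tilde h^{\top}\mathcal{V}\tilde h \ge 0$. Hence $J$ is a convex differentiable function on $\mathbb{R}^{N+4}$. For such a function, first-order stationarity is necessary and sufficient for global minimality. So $z$ is a minimizer exactly when all $N+4$ partials vanish, and these equations are exactly the critical-point system \eqref{critical_1} solved in Theorem~\ref{wiener_oracle}.

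The main obstacle is the meaning of ``the Wiener estimator'' when $\mathrm{Cov}(\tilde m)+\mathcal{V}$ is singular. In that case stationarity characterizes the whole set of minimizers, not just the minimum-norm pseudo-inverse solution. I would state the equivalence as ``$z$ is a minimizer of $J$,'' which reduces to the unique Wiener estimator \eqref{eq:wiener} in the invertible case.
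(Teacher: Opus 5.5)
Your proposal is correct and follows the paper's own proof: compute $\nabla_{z}J = 2\mathcal{M}z + \mu$ and read off the coordinates from the block forms of $\mathcal{M}$ and $\mu$. Your two additions are sound refinements that the paper leaves implicit. One is the explicit positive-semidefiniteness check $z^{\top}\mathcal{M}z = E_{U\sim p_{a_{0}}}[(\tilde h^{\top}\tilde m - \nu)^{2}] + \tilde h^{\top}\mathcal{V}\tilde h \ge 0$. The other is the remark that, when $\mathrm{Cov}(\tilde m)+\mathcal{V}$ is singular, stationarity characterizes every minimizer and not only the pseudo-inverse Wiener solution.
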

\begin{proof}
    This expression follows from the gradient of a quadratic term
    \begin{align}
        \nabla_{z} J(z,a_{0}) = 2 \mathcal{M}(a_{0})z+\mu(a_{0}).
    \end{align}
    Reading the expressions coordinate by coordinate and using the forms of $\mathcal{M}(a_{0}), \mu(a_{0})$ as in Theorem \ref{wiener_oracle} gives the equivalent form.
\end{proof}

\subsubsection*{Proof of Corollary \ref{off_policy_optimal}}
\begin{proof}
This corollary follows from the gradient expression lemma specialized to $z_{\text{on}} = (e_{0}, 0)$. Recall the on-policy score is mean 0 under the density. The partial derivatives specialize to
\begin{align}
&\partial_{\tilde{h}_{i}}J(z_{\text{on}},a_{0}) = 2\Big(E_{U \sim p_{a_{0}}}[\tilde{m}_{i}\tilde{m}_{0}] + \mathcal{V}_{0,0}1\{i = 0\} - \text{Cov}_{U \sim p_{a_{0}}}[\tilde{m}_{i}, s]\Big)\\
&= 2 (E_{U \sim p_{a_{0}}}[\tilde{m}(U,a_{i})(B(u,a_{0})+s(u,a_{0}))]-E_{U \sim p_{a_{0}}}[\tilde{m}(U,a_{i})(s(u,a_{0}))])+2\mathcal{V}_{0,0}1\{i= 0\}\\
&= 2(E_{U \sim p_{a_{0}}}[\tilde{m}_{i}B]+\mathcal{V}_{0,0}1\{i = 0\}).
\end{align}
For $i \neq 0$ this is equal to $2E_{U \sim p_{a_{0}}}[B \tilde{m}_{i}]$. For $i = 0$, the on-policy expected $L^{2}$ error is exactly
\begin{align}
    J(z_{\text{on}},a_{0}) = E_{U \sim p_{a_{0}}}[B^{2}]+\mathcal{V}_{0,0},
\end{align}
and the derivative can be expressed as $2 (E_{U \sim p_{a_{0}}}[B s]+J(z_{\text{on}}, a_{0}))$. The $\nu$ derivative is expressed as,
\begin{align}
\partial_{\nu}J(\tilde{h},\nu,a_{0}) &= -2 E_{U \sim p_{a_{0}}}[\tilde{m}_{0}] = -2 E_{U \sim p_{a_{0}}}[B].
\end{align}
\end{proof}
\subsubsection*{Proof of Corollary \ref{unbiased_wiener}}
\begin{proof}
We take the expressions for the derivative from Corollary \ref{off_policy_optimal} and set $B = 0$ in all cases. Setting the gradient equal to 0, the conditions for the on-policy estimator to be optimal are already satisfied for the $\tilde{h}_{i}, i \neq 0$ derivatives and the $\nu$ derivative. The only derivative condition to check is the $\tilde{h}_{0}$ derivative. This derivative is equal to 0 if and only if
\begin{align}
    J(z_{\text{on}}, a_{0})&=0.
\end{align}
Since there is no bias by assumption, this $L^{2}$ error is only on-policy variance $\mathcal{V}_{0,0}$. This population variance is non-negative, and thus the on-policy estimator is the optimal solution if and only if
\begin{align}
    \mathcal{V}_{0,0} = E_{U \sim p_{a_{0}}}[\text{Var}(\hat{s}(U,a_{0}))] = 0,
\end{align}
which means the estimator has no variance. Combined with the unbiased assumption, this would mean the estimator $\hat{s}$ returns perfect score readings. For any random, unbiased estimator with some variance so that $\mathcal{V}_{0,0}>0$, the on-policy estimator is not Wiener optimal.
\end{proof}

\subsection{Additional Ridge Regression Theory}\label{lin_reg_theory}

To further connect the two forms of $\beta^{*}$ as a coefficient in the functional basis and $h^{*},g^{*}$ as linear smoother directly on the score estimates, we consider the penalty matrices $\Sigma$ and $\Omega$ as diagonal. We will construct the equivalent ridge regression and adjoint optimization problem in Theorem \ref{ridge_adjoint} by first starting with the unconstrained ridge regression solution and then imposing the terminal constraints.

We first review standard ridge regression constructions. Assume initially that $g = 0$ so we do not impose any terminal constraints and we only fit the $h$ vector. Consider that for a fixed $u$ and fixed dimension $j$, $s_{j}(u,a)$ is a continuous function in $a \in [0,1]$. Each $\hat{s}_{j}(u,a_{i})$ is a noisy observation of that function, and we can perform ridge regression in a finite $K$ basis, which amounts to minimizing the quadratic objective
\begin{align}
    \min_{\beta} (\hat{s} - \Phi \beta)^{T}\Sigma^{-1}(\hat{s} - \Phi \beta)+\lambda \beta^{T}\Omega^{-1} \beta.
\end{align}
The minimizing parameter via standard ridge regression theory is the vector of estimates put through the linear operator,
\begin{align}
    \beta^{*} = (\Phi^{T}\Sigma^{-1}\Phi+ \lambda\Omega^{-1})^{-1}\Phi^{T}\Sigma^{-1}\hat{s}.
\end{align}
The vector of fit values at each of the observed $a_{i}$ locations is then the $\hat{s}$ vector put through the shrinkage projection matrix $H$,
\begin{align}
    \bar{s} = \Phi \beta^{*}&= \Phi(\Phi^{T}\Sigma^{-1}\Phi+ \lambda\Omega^{-1})^{-1}\Phi^{T}\Sigma^{-1}\hat{s} \equiv H \hat{s}.
\end{align}
Define $b$ as the first row of the $\Phi$ matrix. The evaluation of $\bar{s}$ at the $a_{0}$ signal level, which is the first entry of the $\bar{s}$ vector, is the inner product with the fixed vector $h^{*}$, which is the first row of the matrix $H$,
\begin{align}
&\bar{s}(u,a_{0}) = \langle h^{*}, \hat{s}(u) \rangle&h^{*} = \Sigma^{-1}\Phi(\Phi^{T}\Sigma^{-1}\Phi+ \lambda\Omega^{-1})^{-1}b.
\end{align}

We can also construct this $h^{*}$ as the solution to its own minimization problem. It is first helpful to decompose the pointwise bias of the affine post-processing estimator into two terms, the transmitted bias and the reproduction error.

\begin{lemma}[Bias--variance decomposition]\label{bv_lemma}
For an affine post-processor with weights $h$, define the transmitted bias and reproduction error
\begin{align}
T_{B}(u,a_{0}) &= \sum_{i=0}^{N}h_{i}B(u,a_{i})&\rho(u,a_{0}) = \sum_{i=0}^{N}h_{i}s(u,a_{i}) - s(u,a_{0}).
\end{align}
Then, by independence of the queries,
\begin{align}
E[\bar{s}(u,a_{0})]-s(u,a_{0}) = T_{B}(u,a_{0})+\rho(u,a_{0}), \qquad
\mathrm{Var}(\bar{s}(u,a_{0})) = \sum_{i=0}^{N}h_{i}^{2}V(u,a_{i}),
\end{align}
so the expected squared error at $(u,a_{0})$ is $(T_{B}+\rho)^{2}+\sum_{i}h_{i}^{2}V(u,a_{i})$. Post-processing improves on the on-policy estimate at $(u,a_{0})$ exactly when this is below $B(u,a_{0})^{2}+V(u,a_{0})$, the value at $h=e_{0}$.
\end{lemma}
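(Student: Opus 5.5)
The plan is a direct computation at a fixed $(u,a_{0})$, holding the weights $h$ fixed and taking expectations only over the internal randomness of the $N+1$ queries. The post-processor in this lemma is $\bar{s}(u,a_{0}) = \sum_{i=0}^{N}h_{i}\hat{s}(u,a_{i})$. If terminal terms $g$ are present, they are deterministic. Their contribution then enters only the mean, and I would simply absorb it into $\rho$.

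First, the mean. By linearity of expectation, $E[\bar{s}(u,a_{0})] = \sum_{i}h_{i}E[\hat{s}(u,a_{i})] = \sum_{i}h_{i}\big(s(u,a_{i})+B(u,a_{i})\big)$. This uses the definition $B(u,a_{i}) = E[\hat{s}(u,a_{i})]-s(u,a_{i})$. Subtracting $s(u,a_{0})$ and grouping the terms gives $T_{B}+\rho$ immediately.

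Second, the variance. Because the queries are independent, as assumed in the proof of Theorem~\ref{wiener_oracle}, all cross covariances vanish. Hence $\mathrm{Var}(\sum_{i}h_{i}\hat{s}(u,a_{i})) = \sum_{i}h_{i}^{2}V(u,a_{i})$.

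Third, the error. The standard identity $E[(X-c)^{2}] = (E X - c)^{2} + \mathrm{Var}(X)$, applied with $X=\bar{s}(u,a_{0})$ and $c=s(u,a_{0})$, gives the stated expected squared error. To check the on-policy case, substitute $h=e_{0}$: then $\rho = s(u,a_{0})-s(u,a_{0}) = 0$, $T_{B} = B(u,a_{0})$, and the variance is $V(u,a_{0})$. The value at $e_{0}$ is therefore $B(u,a_{0})^{2}+V(u,a_{0})$. The claim that post-processing improves exactly when the general expression falls below this value is then just a comparison of the two numbers.

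There is no real obstacle. The only point needing care is stating the independence assumption explicitly, since without it the variance formula would pick up cross-covariance terms $\sum_{i\neq k}h_{i}h_{k}\mathrm{Cov}(\hat{s}(u,a_{i}),\hat{s}(u,a_{k}))$.
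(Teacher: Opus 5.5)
Your proposal is correct and follows the paper's argument: the paper gives no separate proof and treats the lemma as immediate from linearity of expectation and the independence of the queries. This is the same bias--variance step used in the proof of Theorem~\ref{wiener_oracle}, and your handling of a deterministic $g$-term by absorbing it into $\rho$ is consistent with the paper's extended formulation.
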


Assume that the score estimator is unbiased, then there is no transmitted bias $T_{B}$ and the bias is fully the reproduction error. Set $\lambda = 0$. Then if $\Sigma = I$, $b = \Phi^{T}e_{0}$, $h$ is the projection of $e_{0}$ into the span of the design matrix,
\begin{align}
    h = \Phi(\Phi^{T}\Phi)^{-1}\Phi^{T}e_{0}.
\end{align}

Assume the true score is expressed in the functional basis as
\begin{align}
    s(u,a) = \sum_{k \geq 0}\alpha_{k}(u)\phi_{k}(a).
\end{align}
Then assume it has no terms $k>K$. This assumes the true score vector at the on-policy and off-policy locations is in the span of the design matrix. Define the defect terms
\begin{align}
    \delta_{k}(a) =  \sum_{i=0}^{N}h_{i}\phi_{k}(a_{i}) - \phi_{k}(a_{0}) = (\Phi^{T} h)_{k} - b_{k}.
\end{align}
This term penalizes the $h$ smoother for not perfectly recreating proper elements of the finite $K$ basis. If the true score was exactly represented in the finite $K$ basis $s = \sum_{k=0}^{K}\alpha_{k}\phi_{k}$, the reproduction error which determines the bias is $\rho(u,a) = \sum_{k=0}^{K}\delta_{k}(a)\alpha_{k}(u)$. Thus the minimization problem,
\begin{align}
    \min_{h} \|b - \Phi^{T}h\|^{2}
\end{align}
is exactly penalizing $h$ for introducing defect terms. This expression penalizes each defect degree $k$ equally, so a diagonal matrix $\Omega$ penalizes different degree $k$ defects differently,
\begin{align}
    \min_{h} \sum_{k=0}^{K}\Omega_{k,k}((\Phi^{T}h)_{k} - b_{k})^{2}.
\end{align}
We then introduce a $\lambda$ penalty on the norm of $h$ in a $\Sigma$ norm,
\begin{align}
    \min_{h} \sum_{k=0}^{K}\Omega_{k,k}((\Phi^{T}h)_{k} - b_{k})^{2}+\lambda \sum_{i=0}^{N}\Sigma_{i,i}h_{i}^{2}.
\end{align}
$\lambda = 0$ forces the $h$ vector to only focus on minimizing the defect terms, which amounts in a hard constraint $(\Phi^{T} h)_{k} = b_{k}$ for each $k$. By introducing a non-zero $\lambda$, the $h$ vector shrinks its norm while introducing defects into the reproduction term. This is a pure bias-variance tradeoff. The solution here is a standard quadratic program, though not as directly stated as the ridge regression theory for $\beta^{*}$, so we state the result as a proper lemma.

\begin{lemma}\label{h_solution}
Let $\Phi \in \mathbb{R}^{(N+1)\times(K+1)}$, let $\Sigma \in \mathbb{R}^{(N+1)\times(N+1)}$ and $\Omega \in \mathbb{R}^{(K+1)\times(K+1)}$ be symmetric positive definite, and let $\lambda > 0$. The problem
\begin{align}
\min_{h \in \mathbb{R}^{N+1}}\ \big(\Phi^{T}h - b\big)^{T}\Omega\big(\Phi^{T}h - b\big) + \lambda\,h^{T}\Sigma h
\end{align}
has the unique solution
\begin{align}
h^{*} = \big(\Phi\,\Omega\,\Phi^{T} + \lambda\Sigma\big)^{-1}\Phi\,\Omega\, b
= \Sigma^{-1}\Phi\big(\Phi^{T}\Sigma^{-1}\Phi + \lambda\Omega^{-1}\big)^{-1} b.
\end{align}
\end{lemma}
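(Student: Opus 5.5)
The plan is to treat the objective as a strictly convex quadratic in $h$, read off its unique minimizer from the first-order condition, and then convert that expression into the second, ridge-type form with a push-through (Woodbury-type) identity. Write
\begin{align}
F(h) = \big(\Phi^{T}h - b\big)^{T}\Omega\big(\Phi^{T}h - b\big) + \lambda\,h^{T}\Sigma h .
\end{align}
Its Hessian is $2(\Phi\Omega\Phi^{T} + \lambda\Sigma)$. Here $\Phi\Omega\Phi^{T}$ is positive semidefinite because $\Omega \succ 0$, and $\lambda\Sigma$ is positive definite because $\lambda>0$ and $\Sigma \succ 0$. So the Hessian is positive definite and $F$ is strictly convex and coercive. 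Hence a minimizer exists, it is unique, and it is characterized by $\nabla F(h)=0$.

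Setting the gradient to zero gives
\begin{align}
2\Phi\Omega(\Phi^{T}h - b) + 2\lambda\Sigma h = 0 .
\end{align}
This rearranges to $(\Phi\Omega\Phi^{T}+\lambda\Sigma)h = \Phi\Omega b$. Inverting the positive definite matrix yields the first expression $h^{*} = (\Phi\Omega\Phi^{T}+\lambda\Sigma)^{-1}\Phi\Omega b$. No rank assumption on $\Phi$ is needed for this step.

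For the second expression I would verify the push-through identity
\begin{align}
\Phi\Omega\big(\Phi^{T}\Sigma^{-1}\Phi + \lambda\Omega^{-1}\big) = \big(\Phi\Omega\Phi^{T} + \lambda\Sigma\big)\Sigma^{-1}\Phi .
\end{align}
Both sides expand to $\Phi\Omega\Phi^{T}\Sigma^{-1}\Phi + \lambda\Phi$, so the identity is a one-line check. The matrix $\Phi^{T}\Sigma^{-1}\Phi + \lambda\Omega^{-1}$ is positive definite, since it is the sum of a positive semidefinite matrix and $\lambda\Omega^{-1}\succ 0$. Multiplying the identity on the left by $(\Phi\Omega\Phi^{T}+\lambda\Sigma)^{-1}$ and on the right by $(\Phi^{T}\Sigma^{-1}\Phi+\lambda\Omega^{-1})^{-1}$ gives
\begin{align}
\big(\Phi\Omega\Phi^{T}+\lambda\Sigma\big)^{-1}\Phi\Omega = \Sigma^{-1}\Phi\big(\Phi^{T}\Sigma^{-1}\Phi+\lambda\Omega^{-1}\big)^{-1} .
\end{align}
Applying both sides to $b$ gives the claimed equality of the two forms of $h^{*}$.

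The only step needing any care is confirming that both matrices being inverted are positive definite under the stated hypotheses, and that is immediate from $\lambda>0$ together with $\Sigma,\Omega \succ 0$. The resulting formula matches the first row of the hat matrix $H$ computed earlier in the appendix, which is a useful consistency check.
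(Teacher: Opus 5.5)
Your proposal is correct and follows essentially the same route as the paper's proof: strict convexity from the positive definite Hessian $2(\Phi\Omega\Phi^{T}+\lambda\Sigma)$, the first-order condition $(\Phi\Omega\Phi^{T}+\lambda\Sigma)h=\Phi\Omega b$, and the push-through identity $\Phi\Omega A = C\Sigma^{-1}\Phi$. The paper then obtains the second form the same way you do, by multiplying by $C^{-1}$ on the left and $A^{-1}$ on the right.
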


\begin{proof}
Write $C = \Phi\Omega\Phi^{T} + \lambda\Sigma$ and $A = \Phi^{T}\Sigma^{-1}\Phi + \lambda\Omega^{-1}$. Since $\Phi\Omega\Phi^{T}$ and $\Phi^{T}\Sigma^{-1}\Phi$ are positive semi-definite and $\lambda\Sigma$, $\lambda\Omega^{-1}$ are positive definite, both $C$ and $A$ are positive definite, hence invertible.

The objective is a quadratic in $h$ with Hessian $2C \succ 0$, so it is strictly convex and its unique minimizer solves the first-order condition
\begin{align}
2\,\Phi\,\Omega\big(\Phi^{T}h - b\big) + 2\lambda\,\Sigma h = 0
\quad\Longleftrightarrow\quad
C h = \Phi\,\Omega\, b,
\end{align}
so $h^{*} = C^{-1}\Phi\,\Omega\, b$.

For the second form, expand
\begin{align}
\Phi\,\Omega\, A
= \Phi\,\Omega\,\Phi^{T}\Sigma^{-1}\Phi + \lambda\,\Phi
= \big(\Phi\,\Omega\,\Phi^{T} + \lambda\Sigma\big)\Sigma^{-1}\Phi
= C\,\Sigma^{-1}\Phi.
\end{align}
Multiplying on the left by $C^{-1}$ and on the right by $A^{-1}$ gives $C^{-1}\Phi\,\Omega = \Sigma^{-1}\Phi A^{-1}$. Applying both sides to $b$ gives $h^{*} = \Sigma^{-1}\Phi A^{-1} b$ as an equivalent form, which is also the expression for the $h$ smoother derived from the ridge regression problem with $\beta^{*}$.
\end{proof}

Penalizing each defect term equally $\Omega_{k,k} = 1$ corresponds to penalizing the norm of each $\beta_{k}$ in ridge regression equally. An identity $\Sigma_{i,i} = 1$ matrix corresponds to unweighted regression. Thus, we have paired minimization problems,
\begin{align}
    &\min_{h} \sum_{k=0}^{K}\Omega_{k,k}( (\Phi^{T}h)_{k} - b_{k})^{2}+\lambda\sum_{i=0}^{N}h_{i}^{2}\Sigma_{i,i}\\
    &\min_{\beta} \sum_{i=0}^{N} \frac{1}{\Sigma_{i,i}}(\hat{s}(u,a_{i}) - \sum_{k=0}^{K} \beta_{k} \phi_{k}(a_{i}))^{2}+\lambda\sum_{k=0}^{K}\frac{1}{\Omega_{k,k}}\beta_{k}^{2}
\end{align}

We then specialize to terminal score constraints by setting certain $\Sigma_{i,i}$ values to 0, and hard defect constraints by setting setting certain $\Omega_{k,k}$ penalties to infinity.

Consider extending the $\hat{s}$ vector of score observations with two more values $s(u,0)$ and $s(u,1)$ as perfect observations of the score terminal values. Include two more coefficients $\tilde{h} = (h,g)$ in the smoothing vector. Then these $\hat{s}_{N+1} = s(u,0), \hat{s}_{N+2} = s(u,1)$ act no differently in the above equations than any other observation. Define $\Psi_{0}$ and $\Psi_{1}$ as the vector of the basis functions evaluated at $0$ and $1$. We can write equivalent optimization problems with additional terms singling out the contribution of the $g$ term and these additional two observations
\begin{align}
    &\min_{h,g} \sum_{k=0}^{K}\Omega_{k,k}( (\Phi^{T}h)_{k}+(\Psi^{T}g)_{k} - b_{k})^{2}+\lambda\sum_{i=0}^{N}h_{i}^{2}\Sigma_{i,i}+\lambda \sum_{i=0}^{1}\Sigma_{N+i, N+i} g_{i}^{2}\\
    &\min_{\beta} \sum_{i=0}^{N} \frac{1}{\Sigma_{i,i}}(\hat{s}(u,a_{i}) - \sum_{k=0}^{K} \beta_{k} \phi_{k}(a_{i}))^{2}+\lambda\sum_{k=0}^{K}\frac{1}{\Omega_{k,k}}\beta_{k}^{2}+ \sum_{i=0}^{1}\frac{1}{\Sigma_{N+i, N+i}}(s(u,i)-\Psi_{i}^{T}\beta)^{2}.
\end{align}

Setting $\Sigma_{N+1, N+1}, \Sigma_{N+2, N+2}$ to 0 forces a hard constraint on the $\beta$ basis to exactly match the terminal constraints, and results in $g$ as a slack variable with no $\Sigma$ penalty in the $h,g$ minimization problem.

\begin{lemma}\label{terminal_constraints}
Let $\Psi \in \mathbb{R}^{2\times(K+1)}$ contain the basis evaluated at the terminal levels, $\Psi_{0,k} = \phi_{k}(0)$, $\Psi_{1,k} = \phi_{k}(1)$, assume $\Psi$ has full row rank, and let $v(u) = (s(u,0), s(u,1))$. Append the terminal scores to the observations as extra rows of the design matrix, with smoothing vector $\tilde h = (h,g)$ and variances $\Sigma_{N+1,N+1} = \Sigma_{N+2,N+2} = \varepsilon > 0$. Applying Lemma~\ref{h_solution} to this extended problem gives the paired problems
\begin{align}
&\min_{h,g}\ \big(\Phi^{T}h + \Psi^{T}g - b\big)^{T}\Omega\big(\Phi^{T}h + \Psi^{T}g - b\big) + \lambda\, h^{T}\Sigma h + \lambda\varepsilon\,\|g\|^{2},\\
&\min_{\beta}\ \big(\hat{\mathbf{s}}(u) - \Phi\beta\big)^{T}\Sigma^{-1}\big(\hat{\mathbf{s}}(u) - \Phi\beta\big) + \frac{1}{\varepsilon}\big\|v(u) - \Psi\beta\big\|^{2} + \lambda\,\beta^{T}\Omega^{-1}\beta.
\end{align}
As $\varepsilon \to 0$, their solutions converge to the unique solutions of
\begin{align}
\beta^{*}(u) &= \operatorname*{arg\,min}_{\beta \in \mathbb{R}^{K+1}}\ \big(\hat{\mathbf{s}}(u) - \Phi\beta\big)^{T}\Sigma^{-1}\big(\hat{\mathbf{s}}(u) - \Phi\beta\big) + \lambda\,\beta^{T}\Omega^{-1}\beta \quad \text{s.t.}\quad \Psi\beta = v(u),\\
(h^{*},g^{*}) &= \operatorname*{arg\,min}_{h \in \mathbb{R}^{N+1},\, g \in \mathbb{R}^{2}}\ \big(\Phi^{T}h + \Psi^{T}g - b\big)^{T}\Omega\big(\Phi^{T}h + \Psi^{T}g - b\big) + \lambda\, h^{T}\Sigma h,
\end{align}
so the terminal values are reproduced exactly and $g$ is unpenalized. Thus, linear constraints in the functional ridge regression yield free parameters in the adjoint problem.
\end{lemma}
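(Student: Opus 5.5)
Both halves of the statement come from applying Lemma~\ref{h_solution} to the extended problem. The limit $\varepsilon\to 0$ is then handled separately on each side: by a Schur-complement argument for $\beta$, and by a continuity argument on a strictly convex quadratic for $(h,g)$.

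\emph{Step 1: the paired problems.} Extend the design matrix to $\tilde\Phi = \begin{pmatrix}\Phi\\ \Psi\end{pmatrix}\in\mathbb{R}^{(N+3)\times(K+1)}$ and the weight matrix to $\tilde\Sigma_\varepsilon = \mathrm{diag}(\Sigma,\varepsilon I_2)\succ 0$. Keep $b$ as the on-policy row. The adjoint objective of Lemma~\ref{h_solution} with $\tilde h=(h,g)$ then expands as follows:
\begin{itemize}
\item $\tilde\Phi^\top\tilde h - b = \Phi^\top h+\Psi^\top g-b$;
\item $\lambda\tilde h^\top\tilde\Sigma_\varepsilon\tilde h=\lambda h^\top\Sigma h+\lambda\varepsilon\|g\|^2$.
\end{itemize}
Likewise, the weighted ridge objective splits into the $\Sigma^{-1}$ residual plus $\varepsilon^{-1}\|v-\Psi\beta\|^2$. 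Lemma~\ref{h_solution} gives the explicit solutions
\[
h_\varepsilon\text{-block}=\tilde\Sigma_\varepsilon^{-1}\tilde\Phi A_\varepsilon^{-1}b,\qquad A_\varepsilon = A+\varepsilon^{-1}\Psi^\top\Psi,\qquad A=\Phi^\top\Sigma^{-1}\Phi+\lambda\Omega^{-1}.
\]
Here $A\succ 0$, as in the proof of that lemma. This also shows that for each $\varepsilon$ the evaluation $\langle b,\beta_\varepsilon\rangle$ equals the affine combination with $(h_\varepsilon,g_\varepsilon)$.

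\emph{Step 2: the $\beta$ limit.} Write $\beta_\varepsilon = A_\varepsilon^{-1}\big(\Phi^\top\Sigma^{-1}\hat{\mathbf s}+\varepsilon^{-1}\Psi^\top v\big)$ and apply Woodbury:
\[
A_\varepsilon^{-1}=A^{-1}-A^{-1}\Psi^\top(\varepsilon I+\Psi A^{-1}\Psi^\top)^{-1}\Psi A^{-1}.
\]
Since $\Psi$ has full row rank, $G=\Psi A^{-1}\Psi^\top\succ 0$. As $\varepsilon\to 0$, the first term $A_\varepsilon^{-1}\Phi^\top\Sigma^{-1}\hat{\mathbf s}$ tends to $PA^{-1}\Phi^\top\Sigma^{-1}\hat{\mathbf s}$. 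For the second, the identity
\[
\varepsilon^{-1}A_\varepsilon^{-1}\Psi^\top = A^{-1}\Psi^\top(\varepsilon I+G)^{-1}
\]
(from the push-through identity) shows that $\varepsilon^{-1}A_\varepsilon^{-1}\Psi^\top v\to A^{-1}\Psi^\top G^{-1}v$. The limit is exactly the KKT solution of the constrained ridge problem: its stationarity condition is $A\beta-\Phi^\top\Sigma^{-1}\hat{\mathbf s}+\Psi^\top\mu=0$ with $\Psi\beta=v$. That solution is unique because the objective is strictly convex ($A\succ0$) on an affine set that is nonempty by the full-rank assumption.

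\emph{Step 3: the $(h,g)$ limit, the main obstacle.} Here the limiting objective $F_0(h,g)$ is not obviously strictly convex in $g$, so I first establish uniqueness. $F_0$ is strictly convex in $h$ because $\lambda\Sigma\succ0$. For fixed $h$, the dependence on $g$ is through $\Psi^\top g$ with weight $\Omega\succ0$, and $\Psi^\top$ is injective by the full-row-rank assumption, so the Hessian $\Psi\Omega\Psi^\top$ is positive definite. Combining the two, the full Hessian is block-positive-definite: if $(h,g)$ is in its null space then $h=0$, and then $\Psi^\top g=0$ forces $g=0$. So $F_0$ has a unique minimizer $(h^*,g^*)$.

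For convergence, $F_\varepsilon=F_0+\lambda\varepsilon\|g\|^2$ has minimizer solving $(C_0+\varepsilon D)\tilde h=\tilde\Phi\Omega b$, where $C_0\succ0$ is the Hessian above and $D=\mathrm{diag}(0,\lambda I)$. Matrix inversion is continuous at the invertible $C_0$, so $\tilde h_\varepsilon\to C_0^{-1}\tilde\Phi\Omega b=(h^*,g^*)$.

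Finally, the identity $\langle b,\beta_\varepsilon\rangle=\langle h_\varepsilon,\hat{\mathbf s}\rangle+\langle g_\varepsilon,v\rangle$ from Step 1 passes to the limit on both sides. This shows that the limiting pair reproduces the terminal values exactly and realizes the same estimator, with $g$ unpenalized.
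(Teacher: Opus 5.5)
Your proof is correct. Step~1 sets things up the same way the paper does: the terminal scores become two extra observations of variance $\varepsilon$, and Lemma~\ref{h_solution} is applied. After that the routes differ.

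The paper gives no separate limit argument. It motivates the lemma by ``setting the two variances to zero''. Then, in Lemma~\ref{dropped_penalty}, it solves the limiting constrained problems directly from their KKT systems. This yields $\beta^{*}=PA^{-1}\Phi^{\top}\Sigma^{-1}\hat{\mathbf{s}}(u)+A^{-1}\Psi^{\top}G^{-1}v(u)$, $h^{*}=\Sigma^{-1}\Phi A^{-1}P^{\top}b$ and $g^{*}=G^{-1}\Psi A^{-1}b$ without any $\varepsilon$.

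You instead argue through the limit:
\begin{itemize}
\item Your Woodbury/push-through computation recovers the same $\beta^{*}$ as a genuine limit. The KKT check you leave implicit is immediate from $\Psi P=0$.
\item In Step~3 you use the first form $(\tilde\Phi\Omega\tilde\Phi^{\top}+\lambda\tilde\Sigma_{\varepsilon})^{-1}\tilde\Phi\Omega b$ of Lemma~\ref{h_solution}. Unlike $\tilde\Sigma_{\varepsilon}^{-1}\tilde\Phi A_{\varepsilon}^{-1}b$, this form stays meaningful at $\varepsilon=0$. Once you have shown $C_{0}\succ0$, which gives uniqueness, continuity of inversion gives convergence.
\item The identity $\langle b,\beta\rangle=\langle h,\hat{\mathbf{s}}\rangle+\langle g,v\rangle$ then transfers to $\varepsilon=0$ by continuity rather than by a second KKT computation.
\end{itemize}

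What each route buys: yours is what actually justifies the word ``converge'' in the statement. The paper's gives explicit closed forms for $(h^{*},g^{*})$. It also carries over unchanged to the hard defect constraints $\Omega_{k,k}=\infty$ of Lemma~\ref{dropped_penalty}, where your $C_{0}$, built from $\Omega$ rather than $\Omega^{-1}$, is not defined.

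You can recover the paper's closed forms inside your framework by applying the Step~2 identities blockwise to $\tilde h_{\varepsilon}$, using $PA^{-1}=A^{-1}P^{\top}$:
\[
h_{\varepsilon}=\Sigma^{-1}\Phi A_{\varepsilon}^{-1}b\to\Sigma^{-1}\Phi A^{-1}P^{\top}b,
\qquad
g_{\varepsilon}=\varepsilon^{-1}\Psi A_{\varepsilon}^{-1}b=(\varepsilon I+G)^{-1}\Psi A^{-1}b\to G^{-1}\Psi A^{-1}b.
\]

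Two small slips:
\begin{itemize}
\item The formula displayed in Step~1 is the full vector $\tilde h_{\varepsilon}=(h_{\varepsilon},g_{\varepsilon})$, not the $h$-block.
\item Exact reproduction of the terminal values is the constraint $\Psi\beta^{*}=v(u)$ from Step~2. It does not follow from passing the identity to the limit.
\end{itemize}
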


Furthermore, dropped $\beta$ penalties via $\Omega_{k,k} \to \infty$ result in hard constraints on the defect terms. In our problem, we impose hard constraints on the 0 order defect terms. We penalize degree $k\geq 1$ $\beta_{k}$ terms with a cubic penalty $\Omega_{k,k} = \frac{1}{k^{3}}$.

\begin{lemma}\label{dropped_penalty}
In the setting of Lemma~\ref{terminal_constraints}, assume in addition that $\Phi$ has full column rank ($N \geq K$) and that $\Omega$ is diagonal, and let $\mathcal{K} \subset \{0,\dots,K\}$ index coefficients that are left unpenalized: $\Omega_{k,k} \in (0,\infty)$ for $k \notin \mathcal{K}$ and $\Omega_{k,k} = \infty$ for $k \in \mathcal{K}$, with the convention $1/\infty = 0$, so that $\Omega^{-1}$ is diagonal with $(\Omega^{-1})_{k,k} = 0$ for $k \in \mathcal{K}$. Consider
\begin{align}
\beta^{*}(u) = \operatorname*{arg\,min}_{\beta \in \mathbb{R}^{K+1}}\ \big(\hat{\mathbf{s}}(u) - \Phi\beta\big)^{T}\Sigma^{-1}\big(\hat{\mathbf{s}}(u) - \Phi\beta\big) + \lambda\sum_{k \notin \mathcal{K}}\frac{\beta_{k}^{2}}{\Omega_{k,k}} \quad \text{s.t.}\quad \Psi\beta = v(u),
\end{align}
and, writing $\delta(h,g) = \Phi^{T}h + \Psi^{T}g - b$ for the defect vector,
\begin{align}
(h^{*},g^{*}) = \operatorname*{arg\,min}_{h \in \mathbb{R}^{N+1},\, g \in \mathbb{R}^{2}}\ \sum_{k \notin \mathcal{K}}\Omega_{k,k}\,\delta_{k}(h,g)^{2} + \lambda\, h^{T}\Sigma h \quad \text{s.t.}\quad \delta_{k}(h,g) = 0,\ k \in \mathcal{K}.
\end{align}
Unpenalized coefficients in the ridge regression become hard constraints on the defects of the linear smoother. Both problems have unique solutions, and for every realization of $\hat{\mathbf{s}}(u)$,
\begin{align}
\bar{s}(u,a_{0}) = \sum_{k=0}^{K}\beta^{*}_{k}(u)\,\phi_{k}(a_{0}) = \sum_{i=0}^{N}h^{*}_{i}\,\hat{s}(u,a_{i}) + g^{*}_{0}\,s(u,0) + g^{*}_{1}\,s(u,1).
\end{align}
Writing $A = \Phi^{\top}\Sigma^{-1}\Phi + \lambda\,\Omega^{-1}$, $G = \Psi A^{-1}\Psi^{\top}$ and $P = I - A^{-1}\Psi^{\top}G^{-1}\Psi$, the solutions are
\begin{align}
\beta^{*}(u) &= P A^{-1}\Phi^{\top}\Sigma^{-1}\,\hat{\mathbf{s}}(u) + A^{-1}\Psi^{\top}G^{-1}\,v(u),\\
h^{*} &= \Sigma^{-1}\Phi A^{-1}P^{\top} b, \qquad
g^{*} = G^{-1}\Psi A^{-1} b.
\end{align}
This proves the form of the problem stated in Theorem~\ref{ridge_adjoint}.
\end{lemma}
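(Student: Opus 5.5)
The plan is to treat both problems as equality-constrained convex quadratic programs, solve each by its KKT system, and then show that the two evaluations at $a_{0}$ agree as a matrix identity. The first task is to check that $A = \Phi^{\top}\Sigma^{-1}\Phi + \lambda\Omega^{-1}$ is positive definite. The matrix $\Phi^{\top}\Sigma^{-1}\Phi$ is positive definite because $\Phi$ has full column rank and $\Sigma \succ 0$. Adding $\lambda\Omega^{-1}$, which is diagonal with entries $\geq 0$ (zero on $\mathcal{K}$), keeps it positive definite. Hence $G = \Psi A^{-1}\Psi^{\top}$ is positive definite as well, since $\Psi$ has full row rank. The primal objective is then strictly convex, and its feasible set $\{\Psi\beta = v\}$ is a nonempty affine set. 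So $\beta^{*}$ exists and is unique, and it is characterized by the KKT conditions
\begin{align}
A\beta - \Phi^{\top}\Sigma^{-1}\hat{\mathbf{s}} + \Psi^{\top}\eta = 0, \qquad \Psi\beta = v.
\end{align}
Eliminating $\beta = A^{-1}(\Phi^{\top}\Sigma^{-1}\hat{\mathbf{s}} - \Psi^{\top}\eta)$ and solving $G\eta = \Psi A^{-1}\Phi^{\top}\Sigma^{-1}\hat{\mathbf{s}} - v$ gives the stated closed form $\beta^{*} = PA^{-1}\Phi^{\top}\Sigma^{-1}\hat{\mathbf{s}} + A^{-1}\Psi^{\top}G^{-1}v$.

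Pairing with $b$ immediately gives the equality of the two evaluations. We have $\langle b,\beta^{*}\rangle = \langle \Sigma^{-1}\Phi A^{-1}P^{\top}b,\hat{\mathbf{s}}\rangle + \langle G^{-1}\Psi A^{-1}b, v\rangle$, using that $A^{-1}$ and $G^{-1}$ are symmetric. So it remains only to show that the adjoint problem has a unique solution and that this solution equals $h^{*} = \Sigma^{-1}\Phi A^{-1}P^{\top}b$, $g^{*} = G^{-1}\Psi A^{-1}b$.

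For the adjoint problem, I would introduce a multiplier $\gamma_{k}$ for each hard constraint $\delta_{k} = 0$, $k \in \mathcal{K}$. It is convenient to package $\gamma$ together with $\Omega_{kk}\delta_{k}$ for $k\notin\mathcal{K}$ into a single vector $w \in \mathbb{R}^{K+1}$. The KKT conditions then read:
\begin{itemize}
\item stationarity in $h$: $\Phi w + \lambda\Sigma h = 0$, so $h = -\lambda^{-1}\Sigma^{-1}\Phi w$;
\item stationarity in the unpenalized $g$: $\Psi w = 0$;
\item the definition of $w$ together with the constraints: $\Omega^{-1}w = \delta$ componentwise, where this reads $0 = \delta_{k}$ on $\mathcal{K}$ by the convention $1/\infty = 0$.
\end{itemize}
Substituting $\delta = \Phi^{\top}h + \Psi^{\top}g - b$ gives $\lambda\Omega^{-1}w = -\Phi^{\top}\Sigma^{-1}\Phi w + \lambda\Psi^{\top}g - \lambda b$, that is, $Aw = \lambda(\Psi^{\top}g - b)$. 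Combining this with $\Psi w = 0$ yields $\Psi A^{-1}(\Psi^{\top}g - b) = 0$, so $g = G^{-1}\Psi A^{-1}b$. Then $w = -\lambda A^{-1}(I - \Psi^{\top}G^{-1}\Psi A^{-1})b = -\lambda P A^{-1} b$, which in symmetric form is $-\lambda A^{-1}P^{\top}b$ since $PA^{-1} = A^{-1}P^{\top}$. This gives $h = \Sigma^{-1}\Phi A^{-1}P^{\top}b$, matching the claim.

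The main obstacle is justifying uniqueness and sufficiency for the adjoint problem, since its objective is not strictly convex in $g$. I would handle this by first proving feasibility and then showing the KKT system above has a unique solution.
\begin{itemize}
\item \emph{Feasibility.} Any $(h,g)$ with $\delta = 0$ exists because $\Phi^{\top}$ is surjective; this follows from the full column rank of $\Phi$.
\item \emph{Uniqueness of the KKT solution.} Take the difference of two solutions; it solves the homogeneous system with $b = 0$. Taking the inner product of $Aw = \lambda\Psi^{\top}g$ with $A^{-1}\Psi^{\top}g$ and using $\Psi w = 0$ gives $g^{\top}Gg = 0$, so $g = 0$. It then follows that $w = 0$ and $h = 0$.
\item \emph{Sufficiency.} Since the problem is a convex QP with linear constraints, KKT points are exactly the minimizers.
\end{itemize}
Together these give existence and uniqueness of $(h^{*},g^{*})$ and complete the proof. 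The only remaining care is bookkeeping for the $\infty$ convention, which I handle by never writing $\Omega$ itself on $\mathcal{K}$, only $\Omega^{-1}$.
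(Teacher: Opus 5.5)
Your proposal is correct and follows essentially the same route as the paper: KKT systems for both problems, the same packaging of $\Omega_{k,k}\delta_{k}$ and the hard-constraint multipliers into one vector $w$ with $\delta = \Omega^{-1}w$, and the same elimination giving $g^{*}$ and then $h^{*}$. Two of your steps go slightly beyond the paper's text. You verify the evaluation identity explicitly by pairing $b$ with $\beta^{*}$, which the paper leaves implicit in the closed forms. You also prove uniqueness of the KKT solution with a homogeneous-system argument, where the paper only notes that every step was an equivalence.
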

\begin{proof}
Throughout, $\Sigma \succ 0$, $\lambda > 0$, $\Phi$ has full column rank and $\Psi$ has full row rank. Write $c(u) = \Phi^{T}\Sigma^{-1}\hat{\mathbf{s}}(u)$. Both forms of $\beta^{*}$ and $(h^{*}, g^{*})$ follow from the KKT conditions applied to linear constrained convex minimization problems.

\noindent\textbf{Invertibility of Matrices}\\
Since $\Phi$ has full column rank, the matrix $\Phi^{T}\Sigma^{-1}\Phi \succ 0$ is positive definite, while the matrix $\lambda\Omega^{-1} \succeq 0$ is positive semi-definite with some 0 entries on the diagonal. Therefore the matrix $A = \Phi^{T}\Sigma^{-1}\Phi + \lambda\Omega^{-1} \succ 0$ is positive definite and invertible. Since $\Psi$ has full row rank and $A^{-1} \succ 0$, also $G = \Psi A^{-1}\Psi^{T} \succ 0$ is positive definite and invertible. Therefore, $A$, $G$, $A^{-1}$, $G^{-1}$, and $P$ are well defined.

\noindent\textbf{The $\beta$ optimization problem}.\\ Expanding the $\beta$ form, the objective is $\beta^{T}A\beta - 2\beta^{T}c(u)$ plus a constant independent of $\beta$. It is strictly convex since $A \succ 0$, and the constraint $\Psi\beta = v(u)$ is affine and feasible since $\Psi$ has full row rank. So the minimizer exists, is unique, and is characterized by the KKT conditions. The KKT conditions for this problem state for some vector $\nu \in \mathbb{R}^{2}$, the pair $(\beta, \nu)$ solves the equations:
\begin{align}
A\beta - c(u) = \Psi^{T}\nu, \qquad \Psi\beta = v(u).
\end{align}
The first equation gives $\beta = A^{-1}\big(c(u) + \Psi^{T}\nu\big)$. Substituting into the second gives $\Psi A^{-1}c(u) + G\nu = v(u)$, so $\nu = G^{-1}\big(v(u) - \Psi A^{-1}c(u)\big)$. Therefore
\begin{align}
\beta^{*}(u) = A^{-1}c(u) + A^{-1}\Psi^{T}G^{-1}\big(v(u) - \Psi A^{-1}c(u)\big)
= P A^{-1}c(u) + A^{-1}\Psi^{T}G^{-1}v(u),
\end{align}
using $A^{-1} - A^{-1}\Psi^{T}G^{-1}\Psi A^{-1} = \big(I - A^{-1}\Psi^{T}G^{-1}\Psi\big)A^{-1} = PA^{-1}$.

\noindent \textbf{The adjoint problem}\\
 Let $D$ be the diagonal matrix with $D_{k,k} = \Omega_{k,k}$ for $k \notin \mathcal{K}$ and $D_{k,k} = 0$ for $k \in \mathcal{K}$, so the objective is $F(h,g) = \delta^{T}D\delta + \lambda h^{T}\Sigma h$ with $\delta = \delta(h,g)$. $F$ is convex (as $D \succeq 0$ and $\Sigma \succ 0$) and the constraints $\delta_{k} = 0$, $k \in \mathcal{K}$, are affine in $(h,g)$. For such problems the KKT conditions are both necessary and sufficient for optimality. Introducing multipliers $2\mu_{k}$ for the constraints and using $\partial\delta/\partial h = \Phi^{T}$, $\partial\delta/\partial g = \Psi^{T}$, the KKT conditions read
\begin{align}
\Phi\, w + \lambda\Sigma h = 0, \qquad \Psi\, w = 0, \qquad \delta_{k} = 0 \ \ (k \in \mathcal{K}),
\end{align}
where $w \in \mathbb{R}^{K+1}$ has entries $w_{k} = \Omega_{k,k}\delta_{k}$ for $k \notin \mathcal{K}$ and $w_{k} = \mu_{k}$ for $k \in \mathcal{K}$. Since $(\Omega^{-1})_{k,k} = 1/\Omega_{k,k}$ for $k \notin \mathcal{K}$ and $(\Omega^{-1})_{k,k} = 0$ for $k \in \mathcal{K}$, the definition of $w$ together with the constraints is equivalent to the single equation $\delta = \Omega^{-1}w$ with $w$ free on $\mathcal{K}$. Hence $(h,g)$ is optimal if and only if there exists $w \in \mathbb{R}^{K+1}$ with
\begin{align}
\text{(i)}\ \ \Phi\, w + \lambda\Sigma h = 0, \qquad
\text{(ii)}\ \ \Psi\, w = 0, \qquad
\text{(iii)}\ \ \Phi^{T}h + \Psi^{T}g - b = \Omega^{-1}w.
\end{align}
From (i), $h = -\lambda^{-1}\Sigma^{-1}\Phi\, w$. Substituting into (iii),
\begin{align}
-&\lambda^{-1}\Phi^{T}\Sigma^{-1}\Phi\, w + \Psi^{T}g - b = \Omega^{-1}w\\
\Longleftrightarrow&
\lambda^{-1}A\, w = \Psi^{T}g - b\\
\Longleftrightarrow&
w = -\lambda A^{-1}\big(b - \Psi^{T}g\big).
\end{align}
Substituting into (ii), $\Psi A^{-1}\big(b - \Psi^{T}g\big) = 0$, i.e.\ $G g = \Psi A^{-1}b$, so
\begin{align}
g = G^{-1}\Psi A^{-1}b = g^{*}.
\end{align}
Then $b - \Psi^{T}g^{*} = \big(I - \Psi^{T}G^{-1}\Psi A^{-1}\big)b = P^{T}b$, so $w = -\lambda A^{-1}P^{T}b$ and
\begin{align}
h = -\lambda^{-1}\Sigma^{-1}\Phi\, w = \Sigma^{-1}\Phi A^{-1}P^{T}b = h^{*}.
\end{align}
Every step was an equivalence with $A$ and $G$ invertible, so the KKT system has exactly one solution $(h^{*},g^{*},w)$. By sufficiency of the KKT conditions $(h^{*},g^{*})$ is a minimizer, and by necessity every minimizer solves the KKT system, so it is the unique minimizer. Condition (iii) with $(\Omega^{-1})_{k,k} = 0$ confirms $\delta_{k}(h^{*},g^{*}) = 0$ for $k \in \mathcal{K}$.
\end{proof}

\section{Measured bias and variance of the post-processed score}
\label{app:bias-variance}
The relevant term to control the terminal KL error of a score-based sampler is the expected $L^{2}$ error $J(\tilde{h}, \nu ,a)$ defined in \eqref{L2}. This squared error can be decomposed into a bias and variance. While it is clear the affine post-processing estimator reduces the overall squared error in most of our experiments, it is of interest if this reduction is mostly bias reduction, variance reduction, or a bias-variance trade off where the estimator slightly inflates bias to reduce variance and overall squared error. For any affine $(\tilde h, \nu)$ the error functional \eqref{L2} decomposes exactly as
\begin{align}
J(\tilde h,\nu,a_{0}) =
\underbrace{E_{U \sim p_{a_{0}}}\big[\big(\tilde h^{\top}\tilde m(U) - \nu - s(U)\big)^{2}\big]}_{\text{bias}^{2}}
\;+\;
\underbrace{\tilde h^{\top}\mathcal{V}\,\tilde h}_{\text{variance}},
\end{align}
because $\tilde m(U)$ is the conditional mean of the base estimates and $\mathcal{V}$ their
conditional covariance. The on-policy
rule $\tilde h = e_{0}$, $\nu = 0$ returns $\tilde{B}^{2}_{0} + \mathcal{V}_{00}$, the base
estimator's own bias and variance. Both terms are computed from the same estimated
quadratic forms that produce the $J$ curves of Figure~\ref{fig:geometry-j}. Here Figures~\ref{fig:bias-variance-xgmm} and~\ref{fig:bias-variance-rgmm} rearrange those
curves rather than reporting a separate experiment. At a grid of signal levels, the total $J$ error is scaled so the full bar is $100\%$ and that error is split into it's bias and variance proportion. We compare this to the bias and variance of the affine post processing vector scaled by the same value.

\begin{figure}[H]
\centering
\includegraphics[width=\textwidth]{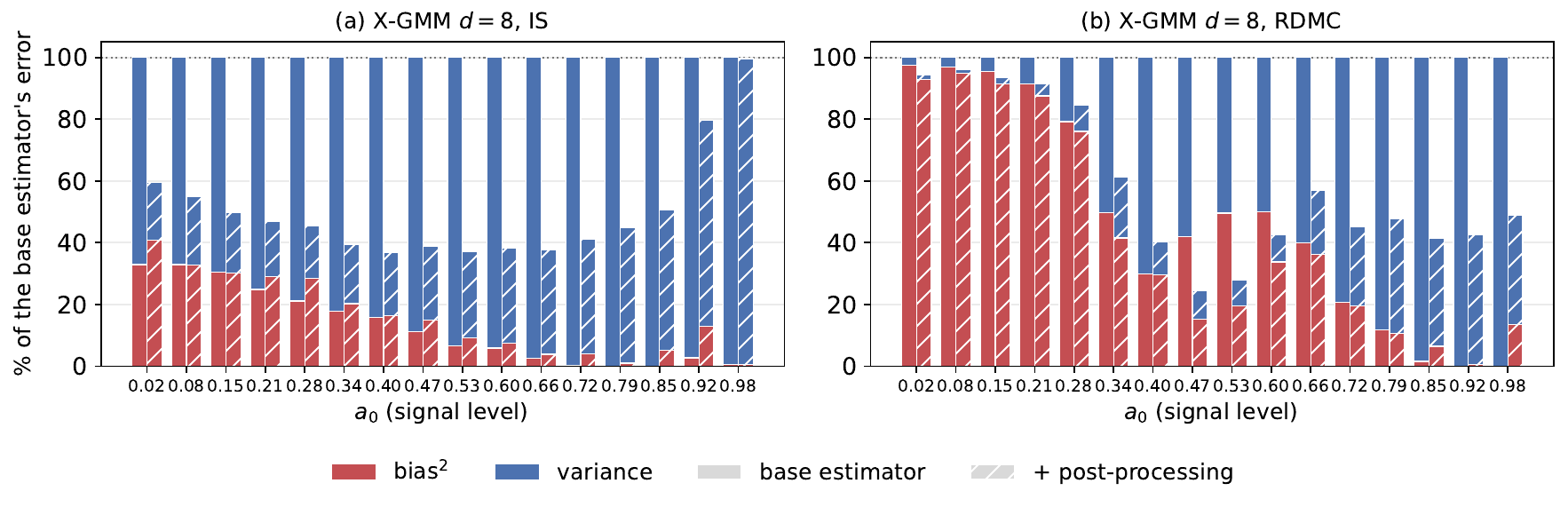}
\caption{Orthogonal-covariance mixture at $d=8$. Solid bars are the base estimator and
hatched bars the post-processed rule, each scaled so the base totals $100\%$ at that
signal level. Importance sampling (a) is variance-dominated at every level and
post-processing removes most of that variance; RDMC (b) is bias-dominated at low signal,
where little can be removed.}
\label{fig:bias-variance-xgmm}
\end{figure}

\begin{figure}[H]
\centering
\includegraphics[width=\textwidth]{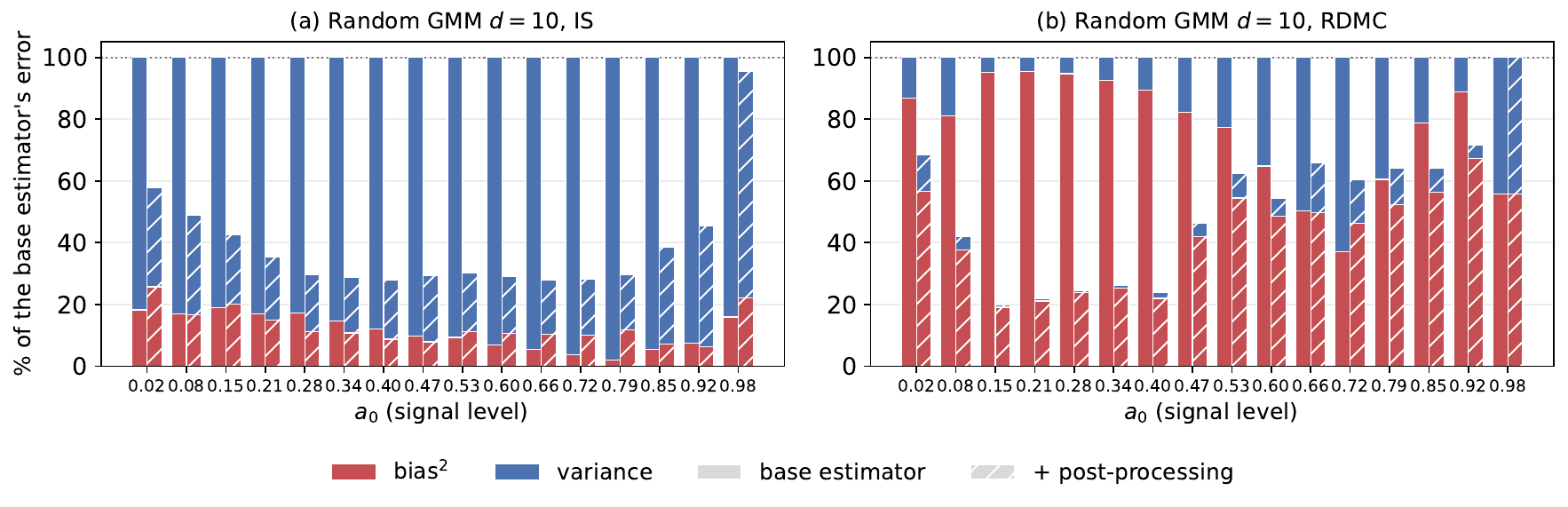}
\caption{Forty-mode mixture at $d=10$, drawn as in Figure~\ref{fig:bias-variance-xgmm}.
The two targets differ in the RDMC panel: here the bias share is large at both ends of
the signal range and dips near $a_0\approx0.6$, so the deepest cuts land in the middle
band rather than at high signal.}
\label{fig:bias-variance-rgmm}
\end{figure}

We see in these figures the dominant method of reducing the $L^{2}$ error is variance reduction rather than bias reduction, and indeed we do see some bias inflation in certain cases. This is to be expected for a shrinkage estimator, the $h$ smoothing vector always has norm less than $1$, so we may interpret the variance term as the weighted average of the on and off-policy variances scaled by the norm of $h$
\begin{align}
    \|h\|^{2} \sum_{i=0}^{N} \frac{h_{i}^{2}}{\|h\|^{2}}\mathcal{V}_{i,i}.
\end{align}

In many cases we see the chosen $h$ vector roughly maintains the bias while drastically reducing the variance with a small norm.

This is however not always the case. We see in Figure \ref{fig:bias-variance-rgmm} the affine post-processing estimator does drastically reduce the bias of the on-policy estimator. This is most likely via the use of the terminal conditions $s(u,0), s(u,1)$ which can be relied on more heavily when the score estimates are structurally incorrect.

\section{Target Densities}
\label{app:targets}
\label{app:geometry-targets}

Write $\Lambda_\kappa=\mathrm{diag}\big(\kappa^{(i-1)/(d-1)}\big)_{i=1}^{d}$ for the diagonal
matrix with eigenvalues log-spaced over $[1,\kappa]$ and $R_\theta$ for the $2\times2$
rotation by $\theta$. Every target except the banana and the shell is a Gaussian mixture
$\pi=\sum_i w_i\,\mathcal{N}(\mu_i,\Sigma_i)$, so the noised density is available in closed form,
\[
p_a(u)=\sum_i w_i\,\mathcal{N}\big(u;\ a\mu_i,\ a^2\Sigma_i+(1-a^2)I\big),
\]
which gives the exact score $s(u,a)$ used to compute $J$, and exact samples for $W_2$.
The banana and the shell have no closed-form $p_a$, so we report only $W_2$ on them.

\paragraph{Ill-conditioned Gaussian.} $\pi=\mathcal{N}(0,\Lambda_\kappa)$, swept over $d$ and $\kappa$.

\paragraph{Banana.} The Haario twisted Gaussian in $d$ dimensions:
$x_1\sim\mathcal{N}(0,s_1^2)$, $x_2\mid x_1\sim\mathcal{N}\big(\gamma(x_1^2-s_1^2),1\big)$, and
$x_3,\dots,x_d\sim\mathcal{N}(0,1)$ i.i.d., with $s_1=3$ and $\gamma=0.1$. The curvature lives in
the $(x_1,x_2)$ plane and the remaining $d-2$ coordinates are standard normal, so the dimension
sweep adds isotropic directions to a fixed twist rather than changing the twist itself. The map
$u_1=x_1/s_1$, $u_2=x_2-\gamma(x_1^2-s_1^2)$, $u_i=x_i$ for $i\geq3$ is an invertible
reparametrisation of an i.i.d.\ standard Gaussian with constant Jacobian determinant $s_1$, so
$\log\pi$ is available in closed form and ground-truth samples are drawn ancestrally.

\paragraph{Orthogonal-covariance mixture (x\_gmm).} In $d=2k$, an equal-weight mixture of $2k$
zero-mean Gaussians whose covariances are block diagonal with $2\times2$ blocks
$R_{\pm45^\circ}\mathrm{diag}(3^2,0.3^2)R_{\pm45^\circ}^{\top}$. The components share a mean,
so they differ only in shape.

\paragraph{Ellipsoidal shell (ell\_shell).} $\pi(x)\propto\exp\!\big(-(\rho(x)-r)^2/2\sigma^2\big)$
with $\rho(x)=\sqrt{x^{\top}\Sigma^{-1}x}$ and $\Sigma=Q\Lambda_\kappa Q^{\top}$, so the mass lies
on an ellipsoid rather than at a point. We use $r=4$, $\sigma=0.3$, $\kappa=50$ and $Q$ a fixed
random rotation. Exact samples come from drawing the Mahalanobis radius by inverse CDF and a
uniform direction on the sphere.

\paragraph{Forty-mode mixture (random\_gmm).}\label{app:random-gmm} $\pi=\frac1{40}\sum_{i=1}^{40}\mathcal{N}(\mu_i,0.05^2I)$
with $\mu_i\sim\mathrm{Unif}([0,2]^d)$, held fixed across seeds. 

\paragraph{Separated anisotropic mixture (sep\_gmm).}
$\pi=0.7\,\mathcal{N}\big(\tfrac23\delta,\ Q\Lambda_\kappa Q^{\top}\big)+0.3\,\mathcal{N}\big(-\tfrac13\delta,\ Q\tilde\Lambda_\kappa Q^{\top}\big)$,
with $Q$ a fixed random rotation, $\tilde\Lambda_\kappa$ equal to $\Lambda_\kappa$ in reverse
order (so the two components have their stiff and loose directions swapped) and $\kappa=10$.
The offset $\delta=\mathrm{sep}\cdot v/\sqrt{v^{\top}\bar\Sigma^{-1}v}$, with $v=Q\mathbf{1}$ and
$\bar\Sigma$ the average of the two covariances, makes $\mathrm{sep}$ the Mahalanobis distance
between the modes at every $d$.

\paragraph{Hypercube-vertex mixture (vertex\_gmm).}\label{app:vertex-gmm}
$\pi=\frac18\sum_{k=1}^{8}\mathcal{N}(\mu_k,\sigma_k^2I)$ with $\mu_k=c\,Qv_k+\tau\bar r\,u$.
Here $v_1,\dots,v_8$ are vertices of $\{\pm1\}^d$ with an even number of $-1$ entries, chosen
by farthest-point sampling from a random start (among $4{,}096$ random such vertices when
$d>20$). $Q$ is a fixed random rotation, and $c$ scales the centers so the closest pair is
$\Delta=12$ apart. The shift by $\tau=0.25$ times the mean radius
$\bar r=\frac18\sum_k\|cQv_k\|$, along a random unit vector $u$, puts the modes at different
distances from the origin. The widths are $\sigma_k=\rho^{s_k/d}$, where $\rho=10^{-2}$ and
$(s_k)$ is a random permutation of $\{0,\frac17,\dots,1\}$, so the volume ratio
$\sigma_{\min}^d/\sigma_{\max}^d=\rho$ is the same at every $d$.

\paragraph{Four-mode anisotropic mixture (zodmc).} The two-dimensional mixture of
\citet{he2024zeroth}, $\pi=\sum_{i=1}^{4}w_i\,\mathcal{N}\big(\tfrac{\mathrm{sep}}{11}m_i,\Sigma_i\big)$
with $w=(0.1,0.2,0.3,0.4)$, $m_i\in\{(0,0),(0,11),(9,9),(11,0)\}$ and
$\Sigma_1=\left(\begin{smallmatrix}1&0.5\\0.5&1\end{smallmatrix}\right)$,
$\Sigma_2=\left(\begin{smallmatrix}0.3&-0.2\\-0.2&0.3\end{smallmatrix}\right)$,
$\Sigma_3=\left(\begin{smallmatrix}1&0.3\\0.3&1\end{smallmatrix}\right)$,
$\Sigma_4=\left(\begin{smallmatrix}1.2&-1\\-1&1.2\end{smallmatrix}\right)$.
$\mathrm{sep}=11$ recovers their original means.

\subsection{Does the covariance have to be estimated?}
\label{app:sigma-ablation}

In our experiments, the weights $h,g$ solve the minimization problem 

\begin{align}
(h^{*},g^{*}) = \operatorname*{arg\,min}_{h\in\mathbb{R}^{N+1},\,g\in\mathbb{R}^{2}}\ \sum_{k\notin\mathcal{K}}\Omega_{k,k}\,\delta_{k}(h,g)^{2} + \lambda\,h^{\top}\Sigma h \quad \text{s.t.}\quad \delta_{k}(h,g) = 0,\ k\in\mathcal{K}.
\end{align}

We use $\Sigma$ as an empirically estimated covariance matrix of the score estimator $\hat{s}$ at the different $\{a_{i}\}_{i=0}^{N}$ design levels. This requires multiple calls to the score estimator to estimate the covariance matrix.

We ablate it by replacing $\hat\Sigma$ with
three progressively coarser surrogates, holding nodes, degree, $\lambda$, misfit weights,
endpoint anchors and oracle budget fixed: $\mathrm{diag}(\hat\Sigma)$, which keeps the
per-level variances and drops the cross-level correlations; $c\,I$ with
$c=\frac1n\sum_i\hat\Sigma_{ii}$, which keeps only the overall scale of the estimated
covariance and discards its shape; and $I$, which measures nothing. The budget is unchanged
throughout, since the replicate groups already feed the reported estimate. Note that $c\,I$
and $I$ are not the same rule: $\mu$ tracks $\operatorname{tr}\Sigma$, so the solve is
invariant to rescaling $\Sigma$ everywhere except the exact endpoint rows, whose variance is
an absolute quantity that does not rescale. Forcing the interior to unit scale therefore
multiplies the trust placed in the endpoint anchors by $1/c$, which is inert when $c\approx1$
and severe on stiff targets.

Figure~\ref{fig:sigma-ablation} reports $W_2$ against dimension for the four models; the
comparisons quoted below are per-seed ratios to $\hat\Sigma$. Two things stand out. 
The correlations are worth nothing: $\mathrm{diag}(\hat\Sigma)$
sits on unity in every panel at every dimension, so estimating $n(n+1)/2$ entries buys nothing
over estimating the $n$ diagonal terms. This is consistent with Appendix~\ref{app:bias-variance}: 
$\Sigma$ enters only the variance half of the error, and that half shrinks as the base estimator's importance
weights degenerate, so an inverse-variance tilt optimises a diminishing share of the error
while distorting $h$ away from the allocation that reproduces the basis.

\begin{figure}[H]
\centering
\includegraphics[width=\textwidth]{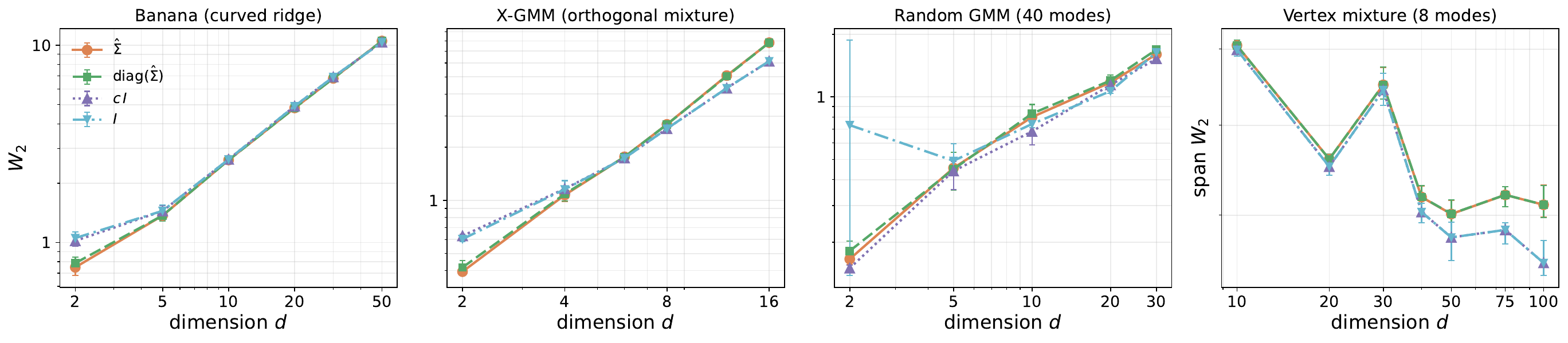}
\vspace{-\baselineskip}
\caption{Covariance ablation: $W_2$ against dimension for the four choices of $\Sigma$, at a
matched oracle budget. Bars are [min, max] over 3 seeds and the vertex panel reports span
$W_2$. $\hat\Sigma$ and $\mathrm{diag}(\hat\Sigma)$ coincide to within about $1\%$, as do
$c\,I$ and $I$, so each model carries its own dash pattern and marker.}
\label{fig:sigma-ablation}
\end{figure}

\section{Score Estimation Subroutine}
\label{app:algorithm}

Algorithm~\ref{alg:score} gives the post-processed score $\bar s(u,a_0)$ that replaces the
base estimator in the reverse process of \eqref{eq:em}, written in the form of
the constrained quadratic program~\eqref{eq:practical_qp} of Section~\ref{sec:nonoracle},
so that line~6 returns the weights $(h^{*},g^{*})$ of Theorem~\ref{ridge_adjoint} and
line~7 is the affine combination~\eqref{eq:affine_combination}. Each design level is
queried $N_{\mathrm{batch}}$ times independently, each query being one full estimate from
$n$ Monte Carlo draws, for a budget of $N_{\mathrm{batch}}(N{+}1)n$ draws per score
($2\times6\times5{,}000=60{,}000$ in our experiments). The spread between replicates at the
same level gives the variances $V_i$, and their mean is the data the weights are applied to.

\begin{algorithm}[H]
\caption{Post-processed score $\bar s(u,a_0)$. The values used in the experiments are
collected in Table~\ref{tab:protocol}.}
\label{alg:score}
\begin{algorithmic}[1]
\Require location $u\in\mathbb{R}^d$; signal level $a_0$; base estimator $\hat s(u,a;n)$;
grid of $N{+}1$ levels uniform on $[\underline a,\overline a]$; Legendre basis
  $\phi_0,\dots,\phi_K$; replicates $N_{\mathrm{batch}}$; ridge weight $\lambda$.
\State Replace the grid level nearest $a_0$ by $a_0$, giving design levels $(a_0,\dots,a_N)$
\State $\hat s_{r,i}\gets\hat s(u,a_i;n)$ for $i=0,\dots,N$ and $r=1,\dots,N_{\mathrm{batch}}$
  \Comment{independent calls}
\State $\bar s_i\gets\frac{1}{N_{\mathrm{batch}}}\sum_{r}\hat s_{r,i}$
\State $V_i\gets\frac{1}{(N_{\mathrm{batch}}-1)\,d}\sum_{r}\lVert\hat s_{r,i}-\bar s_i\rVert^2$
  \Comment{variance of the base estimator at level $a_i$}
\State $v_0\gets-u$,\quad $v_1\gets\nabla\log\pi(u)$
  \Comment{exact scores at $a=0,1$}
\State $\displaystyle (h,g)\gets\operatorname*{arg\,min}_{h\in\mathbb{R}^{N+1},\,g\in\mathbb{R}^{2}}
  \sum_{k=1}^{K}\frac{1}{k^{3}}\Big(\sum_{i=0}^{N}h_i\phi_k(a_i)+g_0\phi_k(0)+g_1\phi_k(1)-\phi_k(a_0)\Big)^{2}
  +\lambda\sum_{i=0}^{N}V_ih_i^{2}$ \Comment{problem~\eqref{eq:practical_qp}}
\Statex \hfill s.t.\ $\sum_{i=0}^{N} h_{i}\phi_{k}(a_{i}) + g_{0}\phi_{k}(0) + g_{1}\phi_{k}(1) = \phi_{k}(a_{0}), \qquad k = 0$
\State \Return $\bar s(u,a_0)=\sum_{i=0}^{N}h_i\,\bar s_i+g_0v_0+g_1v_1$ \Comment{eq.~\eqref{eq:affine_combination}}
\end{algorithmic}
\end{algorithm}

\paragraph{Implementation details.} Here we present some minor implementation modifications that
we found helpful in practice.

\emph{The penalty is placed on the misfit rather than on the variance.} Writing
$\tilde h=(h,g)$, $\tilde A=[\phi(a_0),\dots,\phi(a_N),\phi(0),\phi(1)]$ with
$\phi(a)=(\phi_0(a),\dots,\phi_K(a))^{\top}$, $b=\phi(a_0)$ and
$W=\mathrm{diag}(w_0,1^{-3},2^{-3},\dots,K^{-3})$, the implementation solves
\begin{align}
\tilde h^{*}=\operatorname*{arg\,min}_{\tilde h}\ \tilde h^{\top}\tilde\Sigma\tilde h
+\mu\,\lVert W^{1/2}(\tilde A\tilde h-b)\rVert^{2}
=\mu\big(\tilde\Sigma+\mu\tilde A^{\top}W\tilde A\big)^{-1}\tilde A^{\top}Wb,
\qquad \mu=\lambda\,\frac{\operatorname{tr}\Sigma}{N+1}.
\label{eq:impl_solve}
\end{align}
Dividing through by $\mu$ shows this is the objective of line~6, i.e.\ the adjoint
problem~\eqref{adjoint_prob} with $\Omega_{k,k}=k^{-3}$ and $\mathcal{K}=\{0\}$, with ridge
weight $1/\mu$ and the degree-$0$ constraint replaced by a large weight $w_0=10^{4}$ on its
misfit; the closed form is then the soft-constraint counterpart of $(h^{*},g^{*})$ in
\eqref{eq:ridge_closed}. We use the large weight rather than an exact constraint so that the
weights are given by a single linear solve, which stays well conditioned even though the
endpoint rows of $\tilde\Sigma$ have zero variance.
The factor $\operatorname{tr}(\Sigma)/(N+1)$ makes the trade-off between the two terms
invariant to the overall scale of the base estimator's variance, which differs by orders of
magnitude across targets and along a single reverse trajectory. Without it a single
$\lambda$ would be nearly unbiased on one target and heavily shrunk on another. By doing so
we are able to off load the tuning of $\lambda$ to the overall scale of the base estimator's variance.
We report this $\lambda$ in Table~\ref{tab:protocol} and Appendix~\ref{app:hyperparam}.

\emph{The endpoint constraints are soft.} Theorem~\ref{ridge_adjoint} treats the exact
scores $s(u,0)=-u$ and $s(u,1)=\nabla\log\pi(u)$ as hard constraints, equivalently as
observations of zero variance. The implementation gives them the small variances
$\epsilon_e=\epsilon\,\lVert v_e\rVert^{2}/d$, $e\in\{0,1\}$, with $\epsilon=10^{-4}$, so
that $\tilde\Sigma=\mathrm{blkdiag}(\Sigma,\epsilon_0,\epsilon_1)$ in
\eqref{eq:impl_solve}; $\epsilon\to0$ recovers the hard constraint. This is done for numerical stability
and does not affect the final solution in practice. 

\section{Hyperparameter Selection}
\label{app:hyperparam}
Our algorithm has several hyperparameters that need to be chosen. Specifically, we have to choose the hyperparameters for the base estimator, which for importance sampling and rejection sampling it is simply a constant $M$ saying how any Monte Carlo samples to use. Then there is the size of the grid $K$, and the degree of the polynomial $K_{deg}$, as well as the number of batches $N$ to estimate the covariance matrix, and finally the hyperparameter $\lambda$ for the regression objective.

Because $N$ the number of batches needed to estimate the covariance matrix for the regression problem, before tuning anything downstream we first check how well $\Sigma$ itself is estimated, since every other hyperparameter will depend on it. 

\subsection{The reverse grid}
\label{app:grid}

We place the $20$ levels of \eqref{eq:em} by the step-size rule
\begin{align}
    \tau_{k} = c\,\min\{t_{k},\,1\}, \qquad t_{k} = -\log a_{k},
    \label{eq:grid-rule}
\end{align}
in the diffusion time $t$ of the equivalent variance-preserving process, so that $t_{k}$
is the distance to the data end $a=1$: steps are uniform in $t$ while the noise
dominates, and shrink in proportion to $t_{k}$ once $t_{k}<1$. This is the
exponentially decreasing schedule of
\citet{chen2023improved,he2024zeroth,benton2024nearly}, where it is what gives the
optimal dependence on the early-stopping time.

We use $c=0.85$, which places the grid close to uniform in the log signal-to-noise ratio
$\lambda=\log\!\big(a^{2}/(1-a^{2})\big)$ the spacing used by
\citet{kingma2021variational} and \citet{lu2022dpm}. The levels then run from
$a_{0}=6.7\times10^{-3}$ to $a_{19}=0.999$ with $h_{k}/a_{k}\leq0.68$ throughout. We do not tune $c$ or the grid any further.

\subsection{The configuration used in the experiments}
\label{app:hyperparam-config}

Table~\ref{tab:protocol} collects the set of hyperparameters ran/sweep for different methods across all of our experiments. The compute budget is stated in target-gradient evaluations and is matched
across samplers: each particle costs $60{,}000 \times 20 = 1.2$M evaluations, whether
they are spent as importance-sampling draws, as RDMC inner-chain steps, or as moves of a classical chain. 
\begin{table}[h]
\centering
\caption{Choice of hyperparameters for different methods}
\label{tab:protocol}
\begin{tabular}{lll}
\toprule
 & quantity & value \\
\midrule
\multirow{4}{*}{Reverse process}
 & discretization steps & $20$ \\
 & forward schedule & variance preserving linear schedule, $t\in[10^{-3}, 0.9933]$ \\
 & seeds per cell & $3$ \\
\midrule
\multirow{3}{*}{Compute budget}
 & oracle calls per score query & $60{,}000$ \\
 & calls per particle & $60{,}000 \times 20 = 1.2$M \\
\midrule
\multirow{3}{*}{Post-processing (ours)}
 & base-estimator calls per step & $12$ \\
 & draws per call & $5{,}000$ \\
 & ridge weight $\lambda$ & $1$ (see Appendix~\ref{app:hyperparam}) \\
\midrule
\multirow{4}{*}{RDMC}
 & inner chain length $K$ & $1000$ \\
 & inner step $h$ (swept) & $\{10^{-5},\ 3\times10^{-5},\ 10^{-4},\ 2\times10^{-4},$ \\
 & & $\phantom{\{}5\times10^{-4},\ 10^{-3},\ 2\times10^{-3},\ 3\times10^{-3},$ \\
 & & $\phantom{\{}5\times10^{-3},\ 7\times10^{-3},\ 0.01\}$ \\
 & inner step $h$ & $7\times10^{-3}$ ($2\times10^{-4}$ on banana) \\
\midrule
\multirow{5}{*}{Classical chains}
 & gradient steps per chain & $1.2$M (matched) \\
 & PT step size & $1/L$ \\
 & LMC step size (swept) & $\{3\times10^{-5},\ 10^{-4},\ 3\times10^{-4},\ 10^{-3},$ \\
 & & $\phantom{\{}3\times10^{-3},\ 0.01,\ 0.03,\ 0.05,$ \\
 & & $\phantom{\{}0.1,\ 0.5,\ 1\}$ \\
\bottomrule
\end{tabular}
\end{table}

Every number reported for the ill-conditioned Gaussian uses one configuration, unchanged
across all $16$ cells of the $d\times\kappa$ grid:

\begin{itemize}
\item \textbf{Design levels.} $N+1=6$ nodes on a uniform grid over $b\in[0.05,0.95]$, with
the query level $a_0$ substituted for whichever grid point is nearest to it. Substituting
rather than appending matters: appending lets two design levels coincide when $a_0$ falls
near a grid point, which makes two rows of $\hat\Sigma$ near-identical and the solve
ill-conditioned.
\item \textbf{Degree.} $K_{\mathrm{deg}}=5$, the largest degree the node count supports
($A$ is $(K_{\mathrm{deg}}{+}1)\times(N{+}1)$, so $K_{\mathrm{deg}}\le N$ is required for
$Ah=b$ to be solvable).
\item \textbf{Covariance batches.} $N_{\mathrm{batch}}=2$ replicate groups, pooled across
the particle batch, and the mean of those groups is reused as the data $h$ is applied to
rather than drawing a further independent group. Reuse costs a plug-in bias of order
$1/N_{\mathrm{batch}}$ but cuts the reported estimate's variance by the same factor and
saves $N{+}1$ calls; measured, it was worth roughly $20\%$ in $W_2$.
\item \textbf{Misfit weights.} $w_k\propto k^{-3}$, with the degree-$0$ constraint enforced
exactly. By the primal-dual identity of Section~\ref{sec:nonoracle} this is exactly a $k^3$
ridge on the fitted coefficients, expressed in the dual variables.
\item \textbf{Endpoint anchors.} The two exactly known scores $s(u,0)=-u$ and
$s(u,1)=\nabla\log\pi(u)$ enter as extra design nodes with variance
$\epsilon=10^{-4}\,\overline{s_{\mathrm{end}}^2}$. They cost no Monte Carlo, so the budget
is unchanged. The $\epsilon\to0$ limit -- treating them as exact, zero-variance
constraints -- diverges on targets with a stiff score, because $\nabla\log\pi$ is unbounded
while the base estimate is implicitly bounded by the proposal geometry; scaling $\epsilon$
by the anchor's own magnitude penalizes its \emph{contribution} rather than its weight, and
is inert where the anchor is well-scaled.
\item \textbf{Regression penalty.} $\lambda=1$, fixed.
\end{itemize}

Together these cost $N_{\mathrm{batch}}\,(N{+}1)=12$ base-estimator calls per reverse step,
against \texttt{raw}'s one, and the per-call Monte Carlo budget is divided by $12$ so the
total spend per step matches.

\paragraph{On not tuning $\lambda$ per target.} $\lambda$ is the one hyperparameter whose
optimum visibly moves across the grid, and it would be easy to report the best value in
each cell. We do not, because that is an oracle choice a deployed sampler does not get. The
cost of fixing it is small and worth stating: selecting the best $\lambda\in\{1,10\}$ per
cell improves the geometric-mean gain over \texttt{raw} from $1.94\times$ to $2.09\times$
across the grid, and by $0$--$14\%$ in each cell where the method wins. We report the
untuned $\lambda=1$ number throughout.

\paragraph{A configuration we exclude.} Large $\lambda$ ($\ge100$) is unstable: across the
grid it diverges on a subset of seeds in five separate cells, with a seed standard deviation
comparable to or larger than the mean, while $\lambda\in\{1,10\}$ is stable in the same
cells. Large $\lambda$ drives the fit toward exact mean preservation and away from variance
damping, which is the same regime in which the $\epsilon\to0$ endpoint limit fails. We
exclude it rather than report a number produced by a diverging run.

\subsection{Convergence of the Estimated Score Covariance $\hat\Sigma$}
\label{app:sigma-convergence}

\paragraph{Setup.} Fix a target from Appendix~\ref{app:targets} and a set of $M \in \{10, 50, 150\}$ points
$x_1,\dots,x_M$ drawn on-trajectory at a reference noise level $b$ (i.e. $x_i = a_b x_0^{(i)} + s_b z^{(i)}$ for $x_0^{(i)}
\sim \pi$). Given $N$ independent replicate score evaluations at each of the $N_\ell = 5$ evaluation levels
$b\in\{0.1,0.3,0.5,0.7,0.9\}$, pooled over the $M$ points and the coordinate dimension, we form the empirical covariance
\[
\hat\Sigma_N[i,j] \;=\; \frac{1}{(N-1)Md}\sum_{g=1}^{N}\sum_{m=1}^{M}\sum_{k=1}^{d} \big(\hat s_g(x_m,b_i) - \bar s(x_m,b_i)\big)\big(\hat s_g(x_m,b_j) - \bar s(x_m,b_j)\big),
\]
the same weighted-covariance quantity used internally by the regression. We draw the $N$ replicate groups such that
the sequence $\hat\Sigma_2,\hat\Sigma_5,\dots,\hat\Sigma_{500}$ is a sequence converging to the same value.

\paragraph{Metric.} We compute the absolute Frobenius error $\|\hat\Sigma_N - \Sigma\|_F$ against a reference 
$\Sigma := \hat\Sigma_{500}$, as a function of $N$. We repeat this procedure 5 times, and plot the median along with the range $[\min,\max]$ range of the error.
\paragraph{Comparison.} We compare two base estimators: importance sampling with $1000$ Monte Carlo samples \emph{per time} and rejection sampling with $3000$ samples \emph{per time}, this larger per-point number was chosen to off-set the low acceptance rate of the rejection sampler.
We vary the pooling size $M \in \{10, 50, 150\}$ to see how using more groups changes the convergence picture as a function of the group size $N$.

\begin{figure}[H]
\centering
\includegraphics[width=\textwidth]{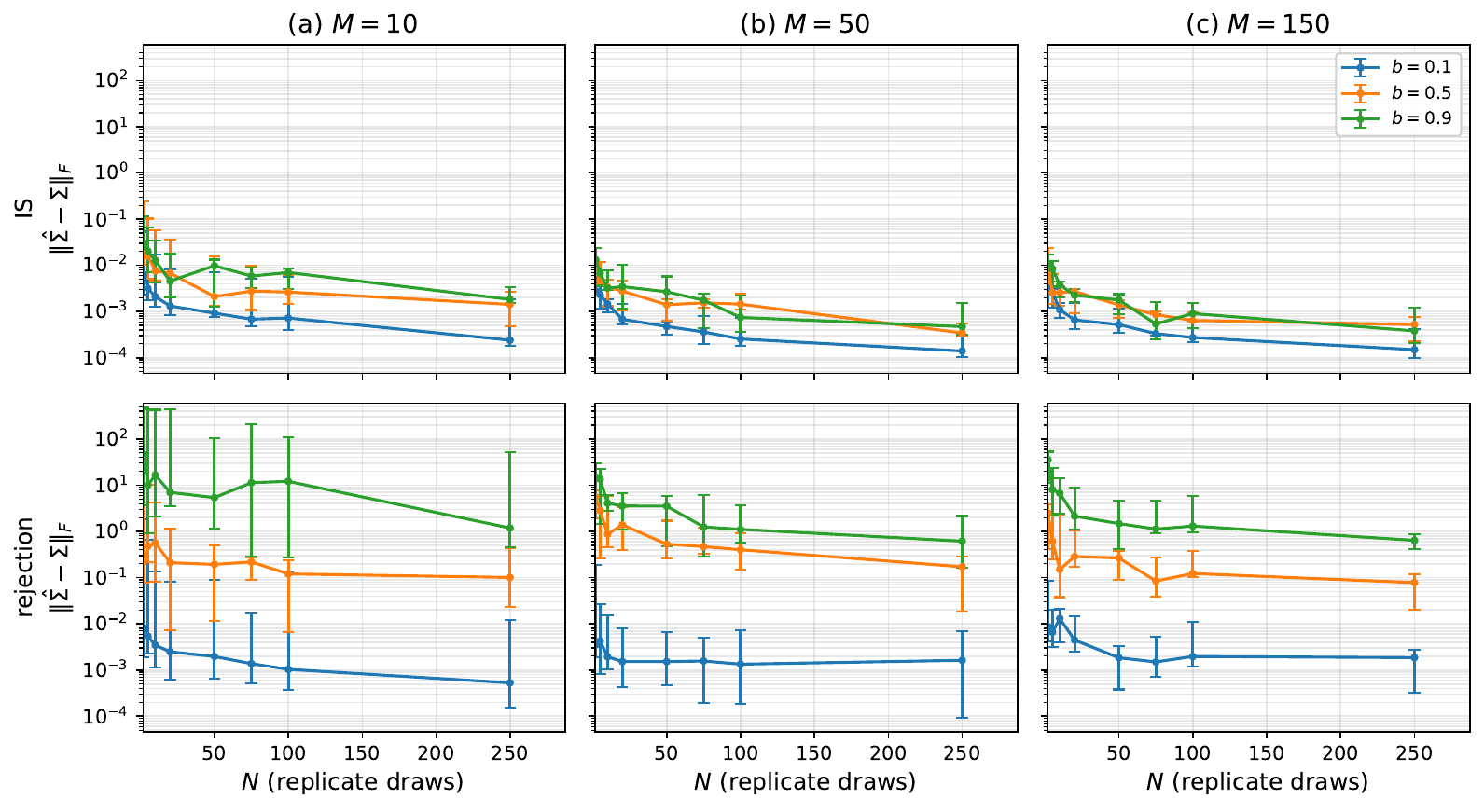}
\caption{$\hat\Sigma$-convergence, ill-conditioned Gaussian ($d=2$, $\kappa=1000$). Rows: IS (top) vs.\ rejection sampling (bottom). Columns: pooling size (a) $M=10$, (b) $M=50$, (c) $M=150$. Each curve is one reference noise level $b$; markers are the median and error bars the full range over 5 independent trials.}
\label{fig:sigma-convergence-logconcave}
\end{figure}

\begin{figure}[H]
\centering
\includegraphics[width=\textwidth]{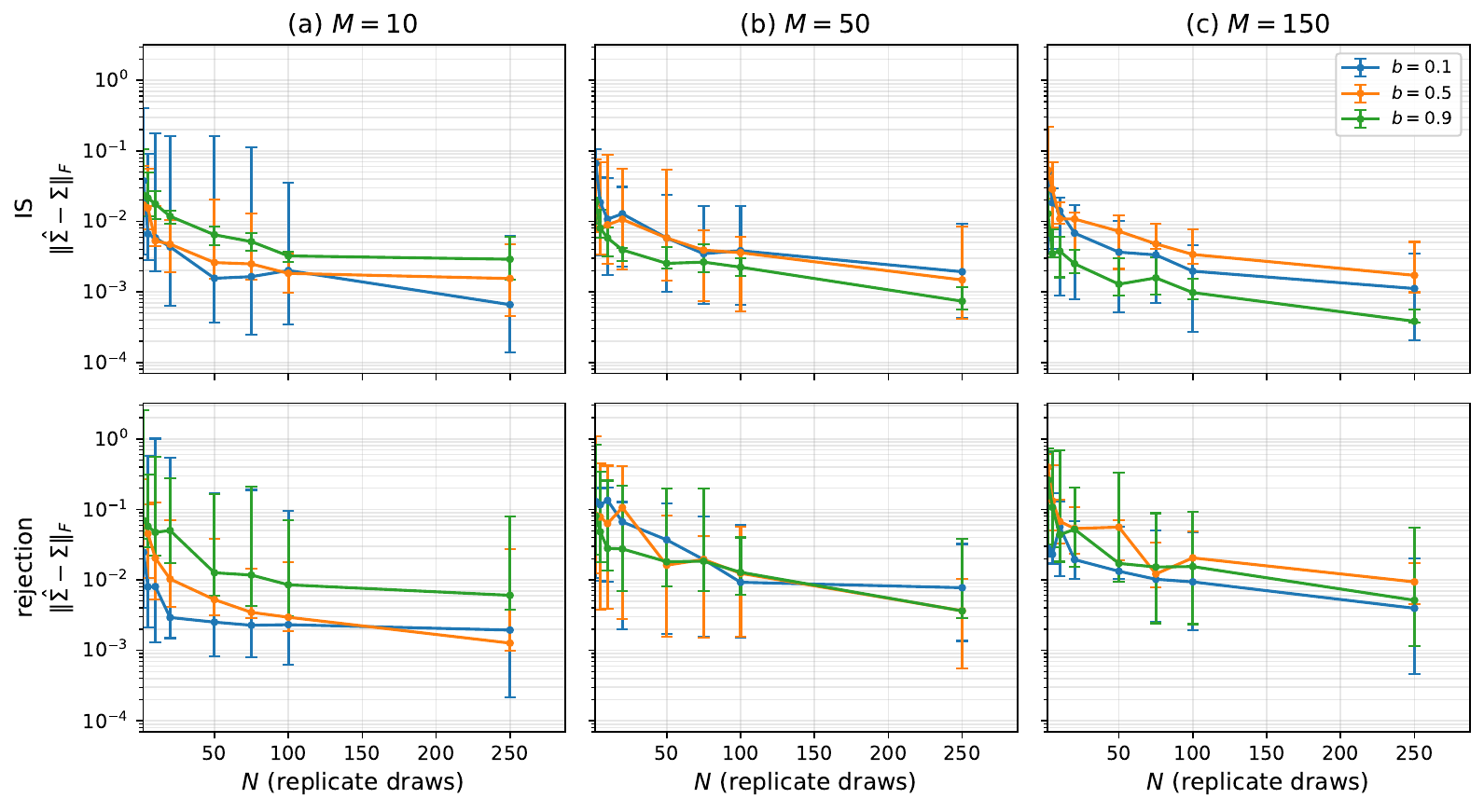}
\caption{$\hat\Sigma$-convergence, orthogonal-covariance mixture (x\_gmm, $d=2$). Same layout as Figure~\ref{fig:sigma-convergence-logconcave}.}
\label{fig:sigma-convergence-xgmm}
\end{figure}

\begin{figure}[H]
\centering
\includegraphics[width=\textwidth]{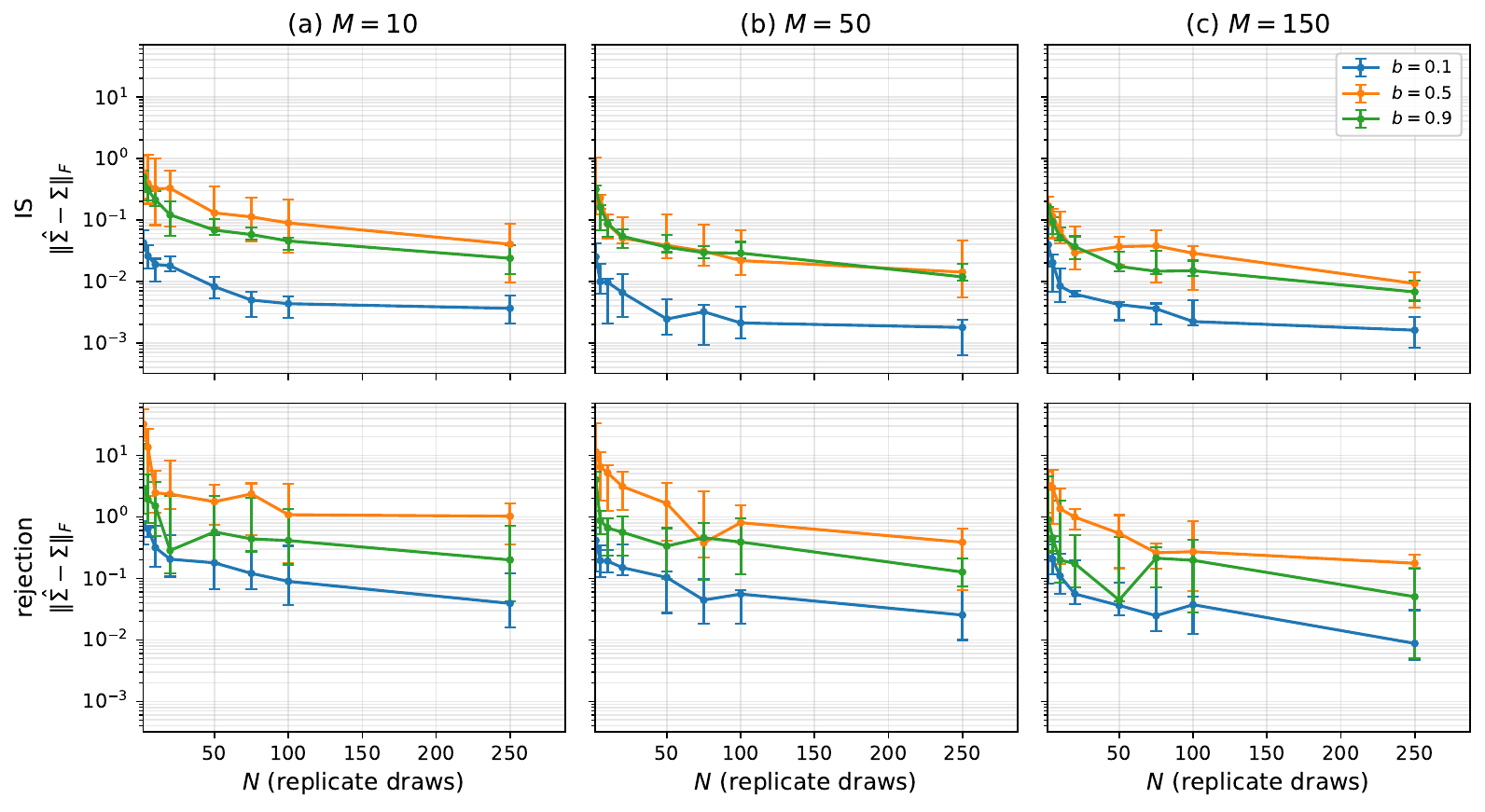}
\caption{$\hat\Sigma$-convergence, four-mode anisotropic mixture (zodmc, $\mathrm{sep}=11$). Same layout as Figure~\ref{fig:sigma-convergence-logconcave}.}
\label{fig:sigma-convergence-zodmc}
\end{figure}

\paragraph{Findings.} Across all three targets, importance sampling's $\hat\Sigma$ converges quickly: by $N\approx 150$
replicate draws its absolute Frobenius error to the $N=500$ reference is at or below $10^{-2}$ for every $M$ and $b$ we tried.
Rejection sampling's $\hat\Sigma$ is noisier and slower converging in every target, which is expected considering the acceptance rate. 
We find that the group size $M$ has a smaller effect in this regime; on the other hand, the choice of $N$ is more important. 
A precise $\hat\Sigma$ takes tens of replicates. A crude estimate of $\Sigma$, however, is enough to set the
weights, consistent with the observation in Appendix~\ref{app:sigma-ablation} that $\Sigma = I$ performs comparably well. We
therefore estimate it with only $N_{\mathrm{batch}}=2$ replicates in our experiments, which keeps each replicate's Monte Carlo sample large
and so limits the bias of self-normalized importance sampling. When more compute is available, more replicates can be used.

\subsection{Optimal weight allocation on the geometry targets}
\label{app:geometry-allocation}

Figure~\ref{fig:geometry-j} reports how much of the recoverable error the constrained
estimator captures. Figure~\ref{fig:geometry-allocation} shows \emph{how} it does so, and
what the unrestricted optimum would do instead. Each column is a query level $a_0$, each
row a design level, coloured by $|h_i|$ normalized within the column -- so a column reads
as the recipe for that query's estimate.

\begin{figure}[H]
\centering
\includegraphics[width=\textwidth]{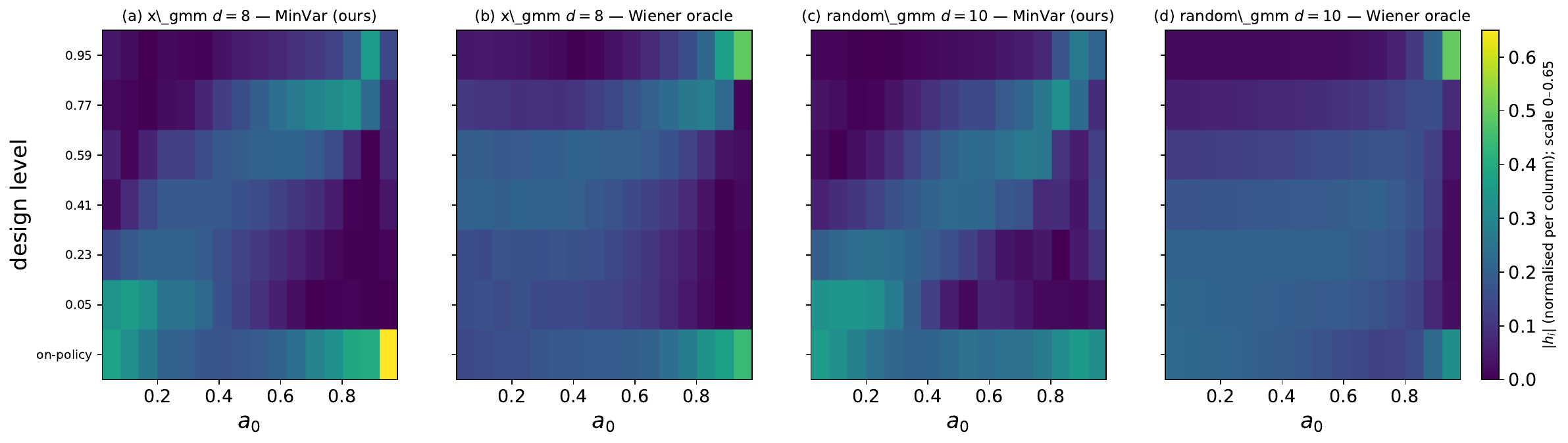}
\caption{Weight allocation on the orthogonal mixture and the $40$-mode mixture, at the
high-dimensional cell of each. Panels alternate ours and the Wiener oracle -- the minimum over all
affine post-processors -- for each target. Colour scale is shared across all four panels.}
\label{fig:geometry-allocation}
\end{figure}

Two features recur from the ill-conditioned Gaussian. The oracle spreads its weight more
evenly across design levels than the constrained rule does, which concentrates on the
levels nearest the query and leans increasingly on the on-policy estimate as $a_0$ grows.
And at small $a_0$ -- where the recoverable gap is largest -- the oracle draws on distant
levels that mean preservation forbids, which is where the residual $12$--$24\%$ of
unrecovered error sits. The two allocations agree most closely in the mid-range, which is
also where the $J$ curves of Figure~\ref{fig:geometry-j} are closest together.

\begin{figure}[H]
\centering
\includegraphics[width=\textwidth]{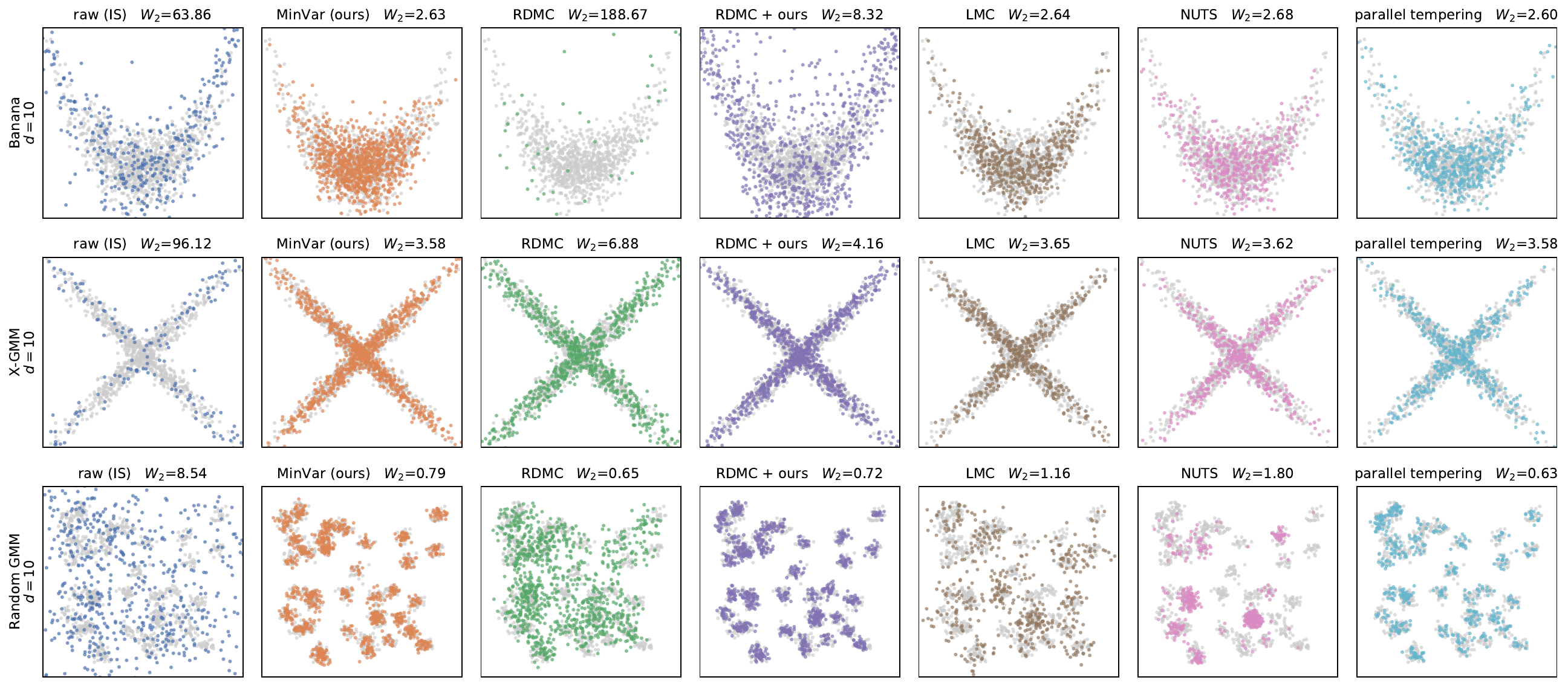}
\caption{Samples at $d=10$ on the banana, the orthogonal mixture and the $40$-mode mixture, first two coordinates, with the target in gray behind each panel. Equal compute, one seed; $W_2$ is quoted above each panel. Figure~\ref{fig:vertex-dim}(d) shows the $40$-mode mixture at $d=30$.}
\label{fig:geometry-scatter-all}
\end{figure}

\end{document}